\newtheorem{thm}{\bf Theorem} 
\newtheorem{lem}[thm]{\bf Lemma}
\newtheorem{cor}{\bf Corollary}
\newcommand{\isTechReport}{true} 
\newcommand\includeTechReport[1]{%
  \ifthenelse{\equal{\isTechReport}{true}}
    {{#1}}
    {\ignorespaces}
\xspace}
\title[Dynamic Witnesses for Static Type Errors]
      {Dynamic Witnesses for Static Type Errors\thanks{This work was supported by NSF grants CCF-1422471, CCF-1223850,
CCF-1218344, CCF-1116289, CCF-0954024, Air Force grant FA8750-15-2-0075,
and a gift from Microsoft Research.}\\
      (or, Ill-Typed Programs Usually Go Wrong)}
\author[E. L. Seidel, R. Jhala, and W. Weimer]
       {ERIC L. SEIDEL, RANJIT JHALA\\
         University of California, San Diego, USA\\
        WESTLEY WEIMER\\
         University of Virginia, USA
         \email{eseidel@cs.ucsd.edu, jhala@cs.ucsd.edu, weimer@virginia.edu}
       }
\definecolor{ocaml}{HTML}{8dd3c7}
\definecolor{sherrloc}{HTML}{FFFFB3}
\newtcbox{\hlOcaml}{colback=red!25,toprule=0pt,bottomrule=0pt}
\newtcbox{\hlSherrloc}{colback=red!25,leftrule=0pt,rightrule=0pt}
	\definecolor{haskellblue}{rgb}{0.0, 0.0, 1.0}
	\definecolor{haskellstr}{rgb}{0.2, 0.2, 0.6}
	\definecolor{haskellred}{rgb}{1.0, 0.0, 0.0}
	\definecolor{gray_ulisses}{gray}{0.55}
	\definecolor{castanho_ulisses}{rgb}{0.71,0.33,0.14}
	\definecolor{preto_ulisses}{rgb}{0.41,0.20,0.04}
	\definecolor{green_ulisses}{rgb}{0.0,0.4,0.0}
	\definecolor{haskellblue}{gray}{0.1}
	\definecolor{haskellstr}{gray}{0.1}
	\definecolor{haskellred}{gray}{0.1}
	\definecolor{gray_ulisses}{gray}{0.1}
	\definecolor{castanho_ulisses}{gray}{0.1}
	\definecolor{preto_ulisses}{gray}{0.1}
	\definecolor{green_ulisses}{gray}{0.1}
\def\codesize{\normalsize}
\lstdefinelanguage{HaskellUlisses}{
	basicstyle=\ttfamily,
	sensitive=true,
	morecomment=[l][\color{gray_ulisses}\ttfamily]{--},
	morecomment=[s][\color{gray_ulisses}\ttfamily]{\{-}{-\}},
	morestring=[b]",
	stringstyle=\color{haskellstr},
	showstringspaces=false,
	numberstyle=\codesize,
	numberblanklines=true,
	showspaces=false,
	breaklines=true,
	showtabs=false,
        escapeinside={(*}{*)},%
	emph=
	{[1]
		FilePath,IOError,abs,acos,acosh,and,any,appendFile,approxRational,asTypeOf,asin,
		asinh,atan,atan2,atanh,basicIORun,break,catch,ceiling,chr,compare,concat,concatMap,
		const,cos,cosh,curry,cycle,decodeFloat,denominator,digitToInt,div,divMod,drop,
		dropWhile,either,elem,encodeFloat,enumFrom,enumFromThen,enumFromThenTo,enumFromTo,
		error,even,exp,exponent,fail,filter,flip,floatDigits,floatRadix,floatRange,floor,
		fmap,foldl,foldl1,foldr,foldr1,fromDouble,fromEnum,fromInt,fromInteger,
		fromRational,fst,gcd,getChar,getContents,getLine,head,id,inRange,index,init,intToDigit,
		interact,ioError,isAlpha,isAlphaNum,isAscii,isControl,isDenormalized,isDigit,isHexDigit,
		isIEEE,isInfinite,isLower,isNaN,isNegativeZero,isOctDigit,isPrint,isSpace,isUpper,iterate,
		last,lcm,length,lex,lexDigits,lexLitChar,lines,log,logBase,lookup,mapM,mapM_,max,
		maxBound,maximum,maybe,min,minBound,minimum,mod,negate,not,notElem,numerator,odd,
		or,pi,primExitWith,print,product,properFraction,putChar,putStr,putStrLn,quot,
		quotRem,range,rangeSize,read,readDec,readFile,readFloat,readHex,readIO,readInt,readList,readLitChar,
		readLn,readOct,readParen,readSigned,reads,readsPrec,realToFrac,recip,rem,repeat,
		reverse,round,scaleFloat,scanl,scanl1,scanr,scanr1,seq,sequence,sequence_,show,showChar,showInt,
		showList,showLitChar,showParen,showSigned,showString,shows,showsPrec,significand,signum,sin,
		sinh,snd,span,splitAt,sqrt,subtract,succ,sum,tail,take,takeWhile,tan,tanh,threadToIOResult,toEnum,
		toInt,toInteger,toLower,toRational,toUpper,truncate,uncurry,undefined,unlines,until,unwords,unzip,
		unzip3,userError,words,writeFile,zip,zip3,zipWith,zipWith3,listArray,doParse,for,initTo,
                create,get,set,div,rescale,add,delete,insert,prop_focus_left_master,average,best,insert,union,split,size,fromList,copy,group,good,bad,foo,explode,singleton,difference,fromJust,sort,unfold,
                subst,unapply,apply,proxy,refinement,fresh,guard,constrain,oneOf,
                queryList,queryCtor,queryField,ctors,decodeCtor,whichOf,ctorArity,eval,
                mkCtor,gCtors,gEncode,gEncodeFields,gDecode,gDecodeFields,reproxyRep,empty,splitCtor,checkField,scanM,
                padAverage,focusUp,execute,checkSMT,inputTypes,outputType,toReft,app,
                genArgs,genWitness,close,witness,mkApps,mkLets,isStuck,findExpr,updState,putBefore,putAfter,
                putRoot,getNext,getPrev,getSubterms,applyCtx,findApp,findVal,replicate,loop,fac,incAllByOne,map
	},
	emphstyle={[1]\color{haskellblue}},
	emph=
	{[2]
		OkMap,OkRBT,OkStackSet,TTrue,Map,Bool,Char,Double,Either,Float,IO,Integer,Int,Maybe,Ordering,Rational,Ratio,ReadS,ShowS,String,Word8,Nat,Pos,Rng,Score,
                Ptr,ForeignPtr,CSize,InPacket,Tree,Prop,TreeEq,TreeLt,Vec,
                NullTerm,IncrList,DecrList,UniqList,BST,MinHeap,MaxHeap,
                PtrN,ByteStringN,ByteStringEq,VO,ByteStringsEq,ByteStringNE,OrdList,Var,RType,Constrain,Gen,Var,Proxy,SMT,Targetable,RefType,Refinement,Ctor,C1,Rep,Rec0,U1,
                GCtors,GDecode,GDecodeFields,GEncode,GEncodeFields,OrdMap,MinusKey,
                len,isBH,isBal,bh,isRB,keys,Sorted,RBT,Col,isBlack,OrdRBT,Set,sz,
                StackSet,NoDuplicates,Data,RBTree,XMonad,Generic,
                VState,Expr,Ctx,Cmd,int,list,string,bool
	},
	emphstyle={[2]\color{castanho_ulisses}},
	emph=
	{[3]
		case,class,data,deriving,do,else,if,return,def,import,in,infixl,infixr,instance,let,tmapM,for2M,forM,zipWithM,otherwise,
		module,measure,pred,predicate,of,primitive,then,type,where,lazy,throw,when,
                rec,function,fun,match,with
	},
	emphstyle={[3]\color{preto_ulisses}\textbf},
	emph=
	{[4]
		quot,rem,div,mod,elem,notElem,seq
	},
	emphstyle={[4]\color{castanho_ulisses}\textbf},
	emph=
	{[5]
		PS,Tip,Node,Black,Red,EQ,False,GT,Just,LT,Left,Nothing,Right,True,Show,Eq,Ord,Num,C,N,Leaf,Bin,CounterExample,
                StepForward,StepBack,JumpForward,JumpBack,StepOver,StepInto,true
	},
	emphstyle={[5]\color{green_ulisses}},
	emph=
	{[6]
		patError, irrefutPatError, nonExhaustiveGuardsError, recSelError, errorOut,
		noMethodBinding
	},
	emphstyle={[6]\color{haskellred}}
}
\lstdefinelanguage{HaskellUlissesMath}[]{HaskellUlisses}{mathescape=true}
\begin{document}
\maketitle

\begin{abstract}
  {Static} type errors are a common stumbling block
  for newcomers to typed functional languages.
  We present a {dynamic} approach to explaining type
  errors by generating counterexample witness inputs that
  illustrate how an ill-typed program goes wrong.
  First, given an ill-typed function, we symbolically
  execute the body to synthesize witness values that
  make the program go wrong.
  We prove that our procedure synthesizes
  {general} witnesses in that if a witness is
  found, then for all inhabited input types,
  there exist values that can make the function go wrong.
  Second, we show how to extend this procedure to
  produce a reduction graph that can be used to
  interactively visualize and debug witness executions.
  Third, we evaluate the coverage of our approach
  on two data sets comprising over 4,500 ill-typed
  student programs.
  Our technique is able to generate witnesses for
  around 85\% of the programs, our reduction graph
  yields small counterexamples for over 80\% of the witnesses,
  and a simple heuristic allows us to use witnesses to
  locate the source of type errors with around 70\% accuracy.
  Finally, we evaluate whether our witnesses help
  students understand and fix type errors, and
  find that students presented with our witnesses
  show a greater understanding of type errors
  than those presented with a standard error message.
\end{abstract}

\section{Introduction}
\label{sec:introduction}

Type errors are a common stumbling block for students
trying to learn typed functional languages like \ocaml\
and \haskell.
Consider the ill-typed @fac@ function on the left in
Figure~\ref{fig:factorial}.
The function returns @true@ in the base case (instead of @1@),
and so \ocaml responds with the error message:
\begin{verbatim}
  This expression has type
    bool
  but an expression was expected of type
    int.
\end{verbatim}
This message makes perfect sense to an expert who is familiar
with the language and has a good mental model of how the type
system works.
However, it may perplex a novice who has yet to develop such a
mental model.
To make matters worse, unification-based type inference algorithms
often report errors far removed from their source.
This further increases the novice's confusion and can actively mislead
them to focus their investigation on an irrelevant piece of code.
Much recent work has focused on analyzing unification constraints
to properly \emph{localize} a type error~\cite{Lerner2007-dt,Chen2014-gd,Zhang2014-lv,Pavlinovic2014-mr},
but an accurate source location does not explain \emph{why} the
program is wrong.

\begin{figure}[t]
\centering
\begin{minipage}{.49\linewidth}
\centering
\begin{ecode}
  let rec fac n =
    if n <= 0 then
      true
    else
      n * (*@\hlOcaml{fac (n-1)}@*)
\end{ecode}
\vspace{2em}
\includegraphics[height=1.5in]{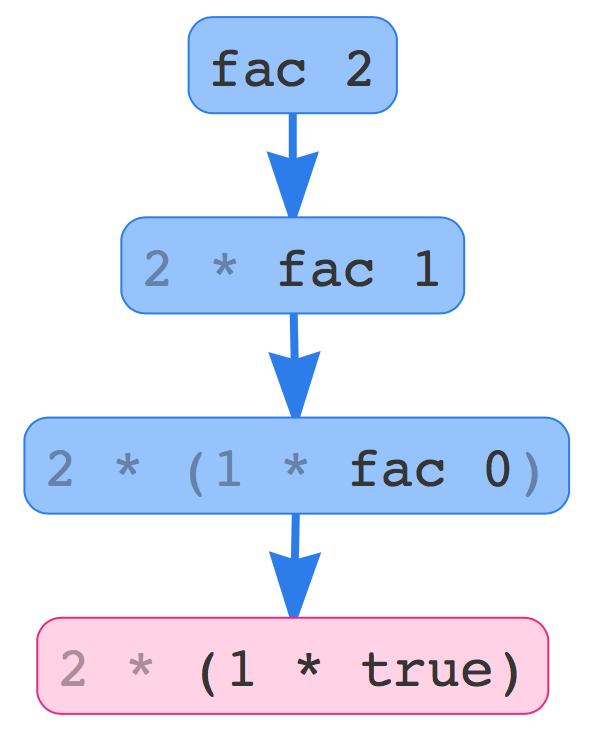}
\end{minipage}
\begin{minipage}{.49\linewidth}
\centering
\includegraphics[height=3in]{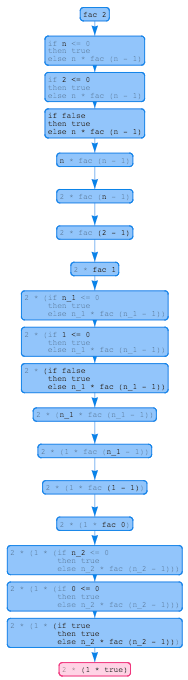}
\end{minipage}
\vspace{1em}
\caption{(top-left) An ill-typed \texttt{fac} function \hlOcaml{highlighting} the error location reported by \ocaml. (bottom-left) Dynamically witnessing the type error in \texttt{fac}, showing only function call-return pairs. (right) The same trace, fully expanded to show each small-step reduction in the computation.}
\label{fig:factorial}
\end{figure}

In this paper we propose a new approach that explains
static type errors by \emph{dynamically} witnessing
how an ill-typed program goes wrong.
We have developed \toolname, an interactive tool that uses
the source of the ill-typed function to automatically synthesize
the result on the bottom-left in Figure~\ref{fig:factorial}, which
shows how the recursive calls reduce to a configuration where
the program ``goes wrong'' --- \ie\ the @int@ value @1@ is to be
multiplied with the @bool@ value @true@.
We achieve this via three concrete contributions.

\paragraph{1. Finding Witnesses}
Our first contribution is an algorithm for searching for
\emph{witnesses} to type errors, \ie\ inputs that cause a
program to go wrong~(\S~\ref{sec:searching-witness}).
This problem is tricky when we cannot rely on
static type information, as we must avoid the
trap of \emph{spurious} inputs that cause
irrelevant problems that would be avoided
by picking values of a different, relevant type.
We solve this problem by developing a novel
operational semantics that combines evaluation
and type inference.
We execute the program with \emph{holes} --- values whose type is
unknown --- as the inputs.
A hole remains abstract until the evaluation
context tells us what type it must have, for
example the parameters to an addition operation
must both be integers.
Our semantics conservatively instantiates holes
with concrete values, dynamically inferring the
type of the input until the program goes wrong.
We prove that our procedure synthesizes \emph{general}
witnesses, which means, intuitively, that if a witness
is found for a given ill-typed function, then, \emph{for all}
(inhabited) input types, there exist values that can make
the function go wrong.

Given a witness to a type error, the novice may still be at a loss.
The standard \ocaml\ interpreter and debugging infrastructure expect
well-typed programs, so they cannot be used to investigate \emph{how}
the witness causes the program to crash.
More importantly, the execution itself may be quite long and may contain
details not relevant to the actual error.

\paragraph{2. Visualizing Witnesses}
Our second contribution is an interactive visualization of the
execution of purely functional \ocaml\ programs, well-typed or not~(\S~\ref{sec:interactive}).
We extend the semantics to also build a \emph{reduction graph}
which records all of the small-step reductions and the context
in which they occur.
The graph lets us visualize the sequence of
steps from the source witness to the stuck term. The user can
interactively expand the computation to expose intermediate steps
by selecting an expression and choosing a traversal strategy.
The strategies include many of the standard debugging moves, \eg\
stepping \emph{forward} or \emph{into} or \emph{over} calls, as well
stepping or jumping \emph{backward} to understand how a particular
value was created, while preserving a context of the intermediate
steps that allow the user to keep track of a term's provenance.

We introduce a notion of \emph{jump-compressed} traces to abstract away
the irrelevant details of a computation.
A jump-compressed trace includes only function
calls and returns. For example, the trace in the bottom-left of
Figure~\ref{fig:factorial} is jump-compressed.
Jump-compressed traces are similar to stack traces in that both show a
sequence of function calls that lead to a crash. However, jump-compressed
traces also show the return values of successful calls, which can be
useful in understanding why a particular path was taken.

\paragraph{3. Evaluating Witnesses}
Of course, the problem of finding witnesses is
undecidable in general. In fact, due to the necessarily
conservative nature of static typing, there
may not even exist any witnesses for a given
ill-typed program.
Thus, our approach is a heuristic that is only useful
if it can find \emph{compact} witnesses for
\emph{real-world} programs.
Our third contribution is an extensive evaluation of our approach
on two different sets of ill-typed programs obtained by instrumenting
compilers used in beginner's classes~(\S~\ref{sec:evaluation}).
The first is the \uwbench\ data set~\cite{Lerner2007-dt}
comprising \uwsize\ ill-typed programs.
The second is a new \ucsdbench\ data set, comprising \ucsdsize\
ill-typed programs.
We show that for both data sets, our technique is able to generate
witnesses for around 85\% of the programs, in under a second in the
vast majority of cases.
Furthermore, we show that a simple interactive strategy yields
compact counterexample traces with at most 5 steps for 60\%
of the programs, and at most 10 steps for over 80\% of the programs.
We can even use witnesses to \emph{localize} type errors with a simple
heuristic that treats the values in a ``stuck'' term as \emph{sources}
of typing constraints and the term itself as a \emph{sink},
achieving around 70\% accuracy in locating the source of the error.

The ultimate purpose of an error report is to help the programmer
\emph{comprehend} and \emph{fix} problematic code.
Thus, our final contribution is a user study that compares \toolname's
dynamic witnesses against \ocaml's type errors along the dimension of
comprehensibility~(\S~\ref{sec:user-study}).
Our study finds that students given one of our witnesses are
consistently more likely to correctly explain and fix a type
error than those given the standard error message produced by
the \ocaml compiler.

%
%
%
%
%
%

\smallskip
All together, our results show that in the vast majority of cases, (novices') ill-typed
programs \emph{do} go wrong, and that the witnesses to these errors can be
helpful in understanding the source of the error. This, in turn, opens the
door to a novel dynamic way to explain, understand, and appreciate the
benefits of static typing.

\paragraph{Contributions Relative to Prior Publications}
This paper extends our ICFP '16 paper of the same
name~\cite{Seidel2016-ul}, focusing on the experimental evaluation.
First, in \S~\ref{sec:how-safe} we investigate the student programs for
which we were unable to synthesize a witness. We group the failures into
five categories, give representative examples, and suggest ways to
improve our feedback in these cases. Interestingly, we find that in the
majority of these failed cases, the programs do not actually admit a
witness in our semantics.
Second, in \S~\ref{sec:locating} we attempt to use our witnesses to
localize type errors with a simple heuristic. We treat the stuck term as
a sink for typing constraints, and the values it contains as sources of
constraints. We can then predict that either the stuck term or one of
the terms that \emph{produced} a value it contains is likely at fault
for the error. We compare our localizations to \ocaml and two
state-of-the-art type error localization tools, and find that we are
competitive with the state of the art.
Finally, we have also extended \S~\ref{sec:user-study} with an analysis
of the statistical significance of our user study results.


\section{Overview}
\label{sec:overview}

We start with an overview of our approach to
explaining (static) type errors using \emph{witnesses}
that (dynamically) show how the program goes wrong.
We illustrate why generating suitable inputs
to functions is tricky in the absence of type
information.
Then we describe our solution to the problem
and highlight the similarity to static type
inference,
Finally, we demonstrate our visualization of
the synthesized witnesses.

\subsection{Generating Witnesses}
\label{sec:generating-witnesses}
Our goal is to find concrete values
that demonstrate how a program ``goes wrong''.

\paragraph{Problem: Which inputs are bad?}
One approach is to randomly generate input values and
use them to execute the program until we find one that
causes the program to go wrong.
Unfortunately, this approach quickly runs aground.
Recall the erroneous @fac@ function from Figure~\ref{fig:factorial}.
%
%
What \emph{types} of inputs should we test @fac@ with?
Values of type @int@ are fair game, but values of type, say,
@string@ or @int list@ will cause the program to go wrong
in an \emph{irrelevant} manner.
Concretely, we want to avoid testing @fac@ with any type other
than @int@ because any other type would cause @fac@ to get stuck
immediately in the @n <= 0@ test.

\paragraph{Solution: Don't generate inputs until forced.}
Our solution is to avoid generating a concrete value for the input at
all, until we can be sure of its type.
The intuition is that we want to be as lenient as possible in our tests,
so we make no assumptions about types until it becomes clear from the
context what type an input must have.
This is actually quite similar in spirit to type inference.

To defer input generation, we borrow the notion of a ``hole'' from
SmallCheck~\cite{Runciman2008-ka}.
A hole --- written \vhole{\thole} --- is a \emph{placeholder} for a
value \ehole of some unknown type \thole.
We leave all inputs as uninstantiated holes until they are demanded by
the program, \eg due to a primitive operation like the @<=@ test.

\paragraph{Narrowing Input Types}
Primitive operations, data construction, and case-analysis \emph{narrow}
the types of values.
For concrete values this amounts to a runtime type check, we ensure that
the value has a type compatible with the expected type.
For holes, this means we now know the type it should
have (or in the case of compound data we know \emph{more} about the
type) so we can instantiate the hole with a value.
The value may itself contain more holes, corresponding to components
whose type we still do not know.
Consider the @fst@ function:
\begin{code}
  let fst p = match p with
    (a, b) -> a
\end{code}
The case analysis tells us that @p@ must be a pair, but it says
\emph{nothing} about the contents of the pair.
Thus, upon reaching the case-analysis we would generate a pair
containing fresh holes for the @fst@ and @snd@ component.
Notice the similarity between instantiation of type variables and
instantiation of holes.
We can compute an approximate type for @fst@ by approximating the types
of the (instantiated) input and output, which would give us:
\begin{mcode}
  fst : ($\thole_1$ * $\thole_2$) -> $\thole_1$
\end{mcode}
We call this type approximate because we only see a single path through
the program, and thus will miss narrowing points that only occur in
other paths.

Returning to @fac@, given a hole as input we will narrow the hole
to an @int@ upon reaching the @<=@ test.
At this point we choose a
random @int@\footnote{With standard heuristics~\cite{Claessen2000-lj} to favor small values.}
for the instantiation and
concrete execution takes over entirely, leading us to the expected crash
in the multiplication.

\paragraph{Witness Generality}
We show in \S~\ref{sec:soundness} that our lazy instantiation of holes
produces \emph{general witnesses}.
That is, we show that if ``executing''
a function with a hole as input causes the
function to ``go wrong'', then there is
\emph{no possible} type for the function.
In other words, for \emph{any} types you might
assign to the function's inputs, there exist values
that will cause the function to go wrong.

\paragraph{Problem: How many inputs does a function take?}
There is another wrinkle, though; how did we know
that @fac@ takes a single argument instead of two
(or none)?
It is clear, syntactically, that @fac@ takes \emph{at least} one
argument, but in a higher-order language with currying, syntax can be
deceiving.
Consider the following definition:
\begin{code}
  let incAllByOne = List.map (+ 1)
\end{code}
Is @incAllByOne@ a function?
If so, how many arguments does it take?
The \ocaml\ compiler deduces that @incAllByOne@ takes a single argument
because the \emph{type} of \hbox{@List.map@} says it takes two arguments, and it is
partially applied to @(+ 1)@.
As we are dealing with ill-typed programs we do not have the luxury of
typing information.

\paragraph{Solution: Search for saturated application.}
We solve this problem by deducing the number of arguments
via an iterative process. We add arguments one-by-one
until we reach a \emph{saturated} application, \ie\
until evaluating the application returns a value
other than a lambda.

\subsection{Visualizing Witnesses}
\label{sec:visual-witness}
We have described how to reliably find witnesses to type errors in \ocaml,
but this does not fully address our original goal --- to \emph{explain}
the errors.
Having identified an input vector that triggers a crash, a common next
step is to step through the program with a \emph{debugger} to observe
how the program evolves.
The existing debuggers and interpreters for \ocaml\ assume a type-correct
program, so unfortunately we cannot use them off-the-shelf.
Instead we extend our search for witnesses to produce an execution
trace.

\paragraph{Reduction Graph}
Our trace takes the form of a reduction graph, which records small-step
reductions in the context in which they occur.
%
%
For example, evaluating the expression @1+2+3@ would produce the
graph in Figure~\ref{fig:simple-reduction-hi}.
\begin{figure}[t]
  \centering
  \includegraphics[height=2in]{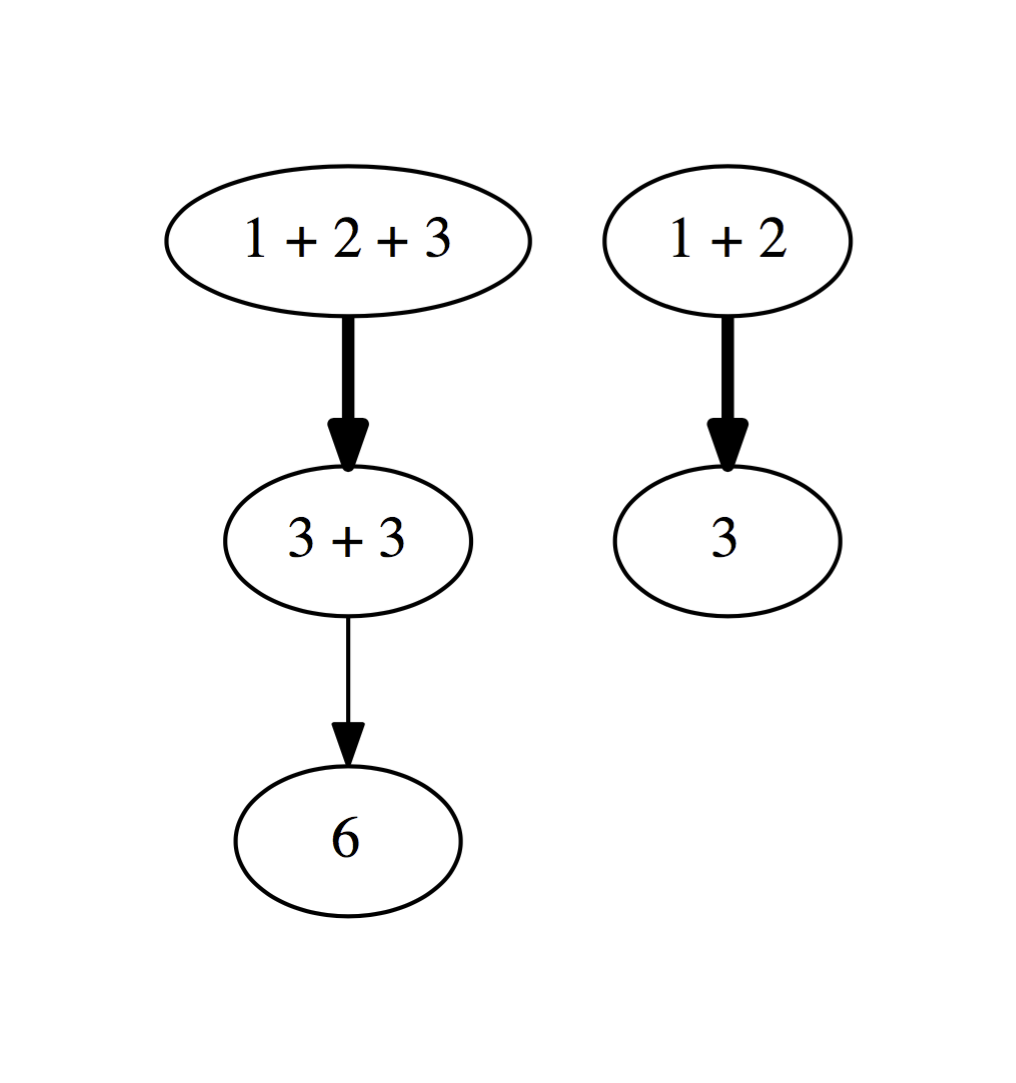}
  \caption{The reduction graph for \texttt{1+2+3}, highlighting the edges
    produced by reducing \texttt{1+2+3} to \hbox{\texttt{3+3}}.}
\label{fig:simple-reduction-hi}
\end{figure}
Notice that when we transition from @1+2+3@ to @3+3@ we collect
both that edge \emph{and} an edge from the sub-term @1+2@ to @3@.
These additional edges allow us to implement two common debugging
operations \emph{post-hoc}: ``step into'' to zoom in on a specific
function call, and ``step over'' to skip over uninteresting
computations.

\paragraph{Interacting with the graph}
The reduction graph is useful for formulating and executing traversals,
but displaying it all at once would quickly become overwhelming.
Our interaction begins by displaying a big-step reduction, \ie the
witness followed by the stuck term.
The user can then progressively fill in the hidden steps of the
computation by selecting a visible term and choosing one of the
applicable traversal strategies --- described in
\S~\ref{sec:interactive} --- to insert another term into the
visualization.

\paragraph{Jump-compressed Witnesses}
It is rare for the initial state of the visualization to be
informative enough to diagnose the error.
Rather than abandon the user, we provide a short-cut to expand the witness
to a \emph{jump-compressed} trace, which contains every function call
and return step.
The jump-compressed trace abstracts the computation as a sequence of
call-response pairs, providing a high-level overview of steps taken
to reach the crash, and a high level of compression compared to the
full trace.
For example, the jump-compressed trace in Figure~\ref{fig:factorial}
contains 4 nodes compared to the 19 in the fully expanded trace.
Our benchmark suite of student programs shows that jump-compression is
practical, with an average jump-compressed trace size of 7 nodes and a
median of 5.

\section{Type-Error Witnesses}
\label{sec:searching-witness}

Next, we formalize the notion of type error witnesses as follows.
First, we define a core calculus within which we will work~(\S~\ref{sec:syntax}).
Second, we develop a (non-deterministic) operational semantics
for ill-typed programs that precisely defines the notion
of a \emph{witness}~(\S~\ref{sec:semantics}).
Third, we formalize and prove a notion of \emph{generality} for
witnesses, which states, intuitively, that if we find a
single witness then for \emph{every possible} type
assignment there exist inputs that are guaranteed to make
the program ``go wrong''~(\S~\ref{sec:soundness}).
Finally, we refine the operational semantics into a
\emph{search procedure} that returns concrete (general)
witnesses for ill-typed programs~\S~(\ref{sec:search-algorithm}).
We have formalized and tested our semantics and generality theorem
in \textsc{PLT-Redex}~\cite{Felleisen2009-ya}.
Detailed proofs for the theorems in this section can be found in
\ifthenelse{\equal{\isTechReport}{true}}
{Appendix~\ref{sec:proofs}.}
{Appendix A of the accompanying tech report~\cite{Seidel2016Dynamic-TechRep}.}

\subsection{Syntax}
\label{sec:syntax}
\begin{figure}
$$
\begin{array}{rrcl}

\emphbf{Expressions}
  & \estuck & ::= & e \spmid \stuck \\
  & e & ::=    & v \spmid x \spmid \eapp{e}{e} \spmid \eplus{e}{e}\\
  &   & \spmid & \eif{e}{e}{e} \\
  &   & \spmid & \epair{e}{e} \spmid \epcase{e}{x}{x}{e} \\
  &   & \spmid & \enode{e}{e}{e} \spmid \eleaf \\
  &   & \spmid & \ecase{e}{e}{x}{x}{x}{e} \\[0.1in]

\emphbf{Values}
  & v  & ::= & n \spmid b \spmid \efun{x}{e} \spmid \vhole{\thole} \spmid \epair{v}{v} \spmid tr \\
  & tr & ::= & \vnode{t}{v}{v}{v} \spmid \vleaf{t} \\[0.05in]

\emphbf{Integers}
  & n & ::= &  0,1,-1,\ldots \\[0.05in]

\emphbf{Booleans}
  & b & ::= &  \etrue \spmid \efalse \\[0.05in]

\emphbf{Types}
  & t & ::=     & \tbool \spmid \tint \spmid \tfun \\
  &   &  \spmid & \tprod{t}{t} \spmid \ttree{t} \spmid \thole \\[0.05in]

\emphbf{Substitutions}
  & \vsu & ::= & \emptysu \spmid \extendsu{\vsu}{\vhole{\thole}}{v} \\
  & \tsu & ::= & \emptysu \spmid \extendsu{\tsu}{\thole}{t} \\[0.1in]
\emphbf{Contexts}
  & C
  & ::=
  &   	 \bullet
  \spmid \eapp{C}{e}
  \spmid \eapp{v}{C} \\
  & & \spmid & \eplus{C}{e} \spmid \eplus{v}{C} \\
  & & \spmid & \eif{C}{e}{e} \\
  & & \spmid & \epair{C}{e} \spmid \epair{v}{C} \\
  & & \spmid & \epcase{C}{x}{x}{e} \\
  & & \spmid & \enode{C}{e}{e} \spmid \enode{v}{C}{e} \spmid \enode{v}{v}{C} \\
  & & \spmid & \ecase{C}{e}{x}{x}{x}{e} \\[0.05in]
\end{array}
$$



\caption{Syntax of \lang}
\label{fig:syntax}
\end{figure}

Figure~\ref{fig:syntax} describes the syntax of \lang, a simple lambda
calculus with integers, booleans, pairs, and binary trees.
As we are specifically interested in programs that \emph{do} go wrong,
we include an explicit \stuck\ term in our syntax. We write $\estuck$ to
denote terms that may be \stuck, and $e$ to denote terms that may not be
stuck.

\paragraph{Holes}
\label{sec:holes}
Recall that a key challenge in our setting is to find witnesses
that are meaningful and do not arise from choosing values from
irrelevant types.
We solve this problem by equipping our term language with a
notion of a \emph{hole}, written \vhole{\thole}, which represents
an \emph{unconstrained} value $\ehole$ that may be replaced with
\emph{any} value of an unknown type \thole.
Intuitively, the type holes \thole\ can be viewed as type variables
that we will \emph{not} generalize over.
A \emph{normalized} value is one that is not a hole,
but which may internally contain holes.
For example
$\vnode{\thole}{\vhole{\thole}}{\vleaf{\thole}}{\vleaf{\thole}}$ is a
normalized value.

\paragraph{Substitutions}
Our semantics ensure the generality of witnesses by incrementally
\emph{refining} holes, filling in just as much information as is
needed locally to make progress (inspired by the manner in
which SmallCheck uses lazy evaluation~\cite{Runciman2008-ka}).
We track how the holes are incrementally filled in, by using
value (resp.\ type) \emph{substitutions} $\vsu$ (resp. $\tsu$)
that map value (resp.\ type) holes to values (resp.\ types).
The substitutions let us ensure that we consistently instantiate each
hole with the same (partially defined) value or type, regardless of the
multiple contexts in which the hole appears.
This ensures we can report a concrete (and general) witness for any
(dynamically) discovered type errors.

A \emph{normalized} value substitution is one whose co-domain is
comprised of normalized values.
In the sequel, we will assume and ensure that all value substitutions
are normalized.
We ensure additionally that the co-domain of a substitution does not
refer to any elements of its domain, \ie when we extend a substitution
with a new binding we apply the substitution to itself.


\subsection{Semantics}
\label{sec:semantics}
Recall that our goal is to synthesize a value that demonstrates
why (and how) a function goes wrong.
We accomplish this by combining evaluation with type inference,
giving us a form of dynamic type inference.
%
Each primitive evaluation step tells us more about the types of the
program values. For example, addition tells us that the addends must be
integers, and 
an if-expression tells us the condition must be a boolean.
When a hole appears in such a context, we know what type it must have
in order to make progress and can fill it in with a concrete value.

The evaluation relation is parameterized by a pair of functions,
\emph{narrow} (\forcesym) and \emph{generate} (\gensym),
that ``dynamically'' perform type-checking and hole-filling
respectively.

\paragraph{Narrowing Types}
The procedure 
\[
\forcesym : v \times t \times \vsu \times \tsu \rightarrow \triple{v \cup \stuck}{\vsu}{\tsu}
\]%
defined in Figure~\ref{fig:narrow}, takes as input a value $v$, a type
$t$, and the current value and type substitutions, and refines $v$ to
have type $t$ by yielding a triple of either the same value and
substitutions, or yields the stuck state if no such refinement is
possible. In the case where $v$ is a hole, it first checks in the given
$\vsu$ to see if the hole has already been instantiated and, if so,
returns the existing instantiation.
\begin{figure}[t]
\[\def\arraystretch{1.5}
\begin{array}{lll}
\forcesym                  & : v \times t \times \vsu \times \tsu \rightarrow \triple{v \cup \stuck}{\vsu}{\tsu} \\

\force{\vhole{\thole}}{t}{\vsu}{\tsu} & \defeq 
\begin{cases}
  \triple{v}{\vsu}{\tsu'}    &\ \textbf{if} \begin{array}{l} v = \lookupsu{\vsu}{\vhole{\thole}}, \\
                                         \tsu' = \unify{\{\thole, t, \typeof{v}\}}{\tsu}\end{array} \\[0.16in]
  \triple{\stuck}{\vsu}{\tsu} &\ \textbf{if} \begin{array}{l} v = \lookupsu{\vsu}{\vhole{\thole}}\end{array} \\[0.08in]
  \triple{v}{\extendsu{\vsu}{\vhole{\thole}}{v}}{\tsu'} &\ \textbf{if} \begin{array}{l} \tsu' = \unify{\{\thole, t\}}{\tsu}, \\ v = \gen{t}{\tsu'} \end{array}\\
\end{cases}  \\ 
\force{n}{\tint}{\vsu}{\tsu}     & \defeq \triple{n}{\vsu}{\tsu} \\
\force{b}{\tbool}{\vsu}{\tsu}    & \defeq \triple{b}{\vsu}{\tsu} \\
\force{\efun{x}{e}}{\tfun}{\vsu}{\tsu} & \defeq \triple{\efun{x}{e}}{\vsu}{\tsu} \\
\force{\epair{v_1}{v_2}}{\tprod{t_1}{t_2}}{\vsu}{\tsu} & \defeq \triple{\epair{v_1}{v_2}}{\vsu}{\tsu''} & \hspace{-43.5mm} \textbf{if} \begin{array}{l} \tsu' = \unify{\{\typeof{v_1}, t_1\}}{\tsu},\\ \tsu'' = \unify{\{\typeof{v_2}, t_2\}}{\tsu'} \end{array} \\
\force{\vleaf{t_1}}{\ttree{t_2}}{\vsu}{\tsu} & \defeq \triple{\vleaf{t_1}}{\vsu}{\tsu'} & \hspace{-43.5mm} \textbf{if} \begin{array}{l} \tsu' = \unify{\{t_1, t_2\}}{\tsu} \end{array}\\
\force{\vnode{t_1}{v_1}{v_2}{v_3}}{\ttree{t_2}}{\vsu}{\tsu} \hspace{-3mm} & \defeq \triple{\vnode{t_1}{v_1}{v_2}{v_3}}{\vsu}{\tsu'} & \hspace{-43.5mm} \textbf{if} \begin{array}{l} \tsu' = \unify{\{t_1, t_2\}}{\tsu} \end{array} \\
\force{v}{t}{\vsu}{\tsu} & \defeq \triple{\stuck}{\vsu}{\tsu}
\end{array}
\]
\caption{Narrowing values}
\label{fig:narrow}
\end{figure}
%
As the value substitution is normalized, in the first case of \forcesym\ we
do not need to \forcesym\ the result of the substitution, the sub-hole
will be narrowed when the context demands it.

\paragraph{Generating Values} The (non-deterministic)
$\gen{t}{\tsu}$ in Figure~\ref{fig:gen} takes
as input a type $t$ and returns a value of that type.
For base types the procedure returns an arbitrary value of
that type.
For functions it returns a lambda with a \emph{new} hole
denoting the return value.
For unconstrained types (denoted
by $\thole$) it yields a fresh hole constrained to have type
\thole (denoted by $\vhole{\thole}$).
When generating a $\ttree{t}$ we must take care to ensure
the resulting tree is well-typed.
For a polymorphic type $\ttree{\thole}$ 
or $\tprod{\thole_1}{\thole_2}$
we will place holes in the generated value; they will be lazily filled
in later, on demand.

\begin{figure}[t]
\[
\begin{array}{lcll}
\gensym       & :   & t \times \tsu \rightarrow v \\
\gen{\thole}{\tsu}  & \defeq  & \gen{\subst{\tsu}{\thole}}{\tsu} &  \text{if } \thole \in dom(\tsu) \\
\gen{\tint}{\tsu}   & \defeq  & n &  \text{non-det.} \\
\gen{\tbool}{\tsu}  & \defeq  & b &  \text{non-det.} \\
\gen{\tprod{t_1}{t_2}}{\tsu}  & \defeq  & \epair{\gen{t_1}{\tsu}}{\gen{t_2}{\tsu}} & \\ 
\gen{\ttree{t}}{\tsu}  & \defeq  & tr &  \text{non-det.} \\
\gen{\tfun}{\tsu}   & \defeq & \efun{x}{\vhole{\thole}} &  \text{\ehole, \thole are fresh} \\
\gen{\thole}{\tsu}  & \defeq & \vhole{\thole} & \text{\ehole is fresh} \\
\end{array}
\]
\caption{Generating values}
\label{fig:gen}
\end{figure}

\paragraph{Steps and Traces}
\begin{figure}[t]
\judgementHead{Evaluation}{\step{\estuck}{\vsu}{\tsu}{\estuck}{\vsu}{\tsu}}
\centerline{
\begin{minipage}{1.2\textwidth}
\begin{gather*}
\inference[\replusgood]
  {\triple{n_1}{\vsu'}{\tsu'} = \force{v_1}{\tint}{\vsu}{\tsu} \\
   \triple{n_2}{\vsu''}{\tsu''} = \force{v_2}{\tint}{\vsu'}{\tsu'} \\
   n = \eplus{n_1}{n_2}}
  {\step{\inctx{\eplus{v_1}{v_2}}}{\vsu}{\tsu}{\inctx{n}}{\vsu''}{\tsu''}}
\quad
\inference[\replusbadone]
  {\triple{\stuck}{\vsu'}{\tsu'} = \force{v_1}{\tint}{\vsu}{\tsu}}
  {\step{\inctx{\eplus{v_1}{v_2}}}{\vsu}{\tsu}{\stuck}{\vsu'}{\tsu'}}
\\[0.05in]
\inference[\replusbadtwo]
  {\triple{\stuck}{\vsu'}{\tsu'} = \force{v_2}{\tint}{\vsu}{\tsu}}
  {\step{\inctx{\eplus{v_1}{v_2}}}{\vsu}{\tsu}{\stuck}{\vsu'}{\tsu'}}
\quad
\inference[\reifgoodone]
  {\triple{\etrue}{\vsu'}{\tsu'} = \force{v}{\tbool}{\vsu}{\tsu}}
  {\step{\inctx{\eif{v}{e_1}{e_2}}}{\vsu}{\tsu}{\inctx{e_1}}{\vsu'}{\tsu'}}
\\[0.05in]
\inference[\reifgoodtwo]
  {\triple{\efalse}{\vsu'}{\tsu'} = \force{v}{\tbool}{\vsu}{\tsu}}
  {\step{\inctx{\eif{v}{e_1}{e_2}}}{\vsu}{\tsu}{\inctx{e_2}}{\vsu'}{\tsu'}}
\quad
\inference[\reifbad]
  {\triple{\stuck}{\vsu'}{\tsu'} = \force{v}{\tbool}{\vsu}{\tsu}}
  {\step{\inctx{\eif{v}{e_1}{e_2}}}{\vsu}{\tsu}{\stuck}{\vsu'}{\tsu'}}
\\[0.05in]
\inference[\reappgood]
  {\triple{\efun{x}{e}}{\vsu'}{\tsu'} = \force{v_1}{\tfun}{\vsu}{\tsu}}
  {\step{\inctx{\eapp{v_1}{v_2}}}{\vsu}{\tsu}{\inctx{e\sub{x}{v_2}}}{\vsu'}{\tsu'}}
\quad
\inference[\reappbad]
  {\triple{\stuck}{\vsu'}{\tsu'} = \force{v_1}{\tfun}{\vsu}{\tsu}}
  {\step{\inctx{\eapp{v_1}{v_2}}}{\vsu}{\tsu}{\stuck}{\vsu'}{\tsu'}}
\\[0.05in]
\inference[\releafgood]
  {\thole \mbox{ is fresh}}
  {\step{\inctx{\eleaf}}{\vsu}{\tsu}{\inctx{\vleaf{\thole}}}{\vsu}{\tsu}}
\quad
\inference[\renodegood]
  {
   t = \typeof{v_1} \\
   \triple{v_2'}{\vsu_2}{\tsu_2} = \force{v_2}{\ttree{t}}{\vsu_1}{\tsu_1} \\
   \triple{v_3'}{\vsu_3}{\tsu_3} = \force{v_3}{\ttree{t}}{\vsu_2}{\tsu_2} \\
  }
  {\step{\inctx{\enode{v_1}{v_2}{v_3}}}{\vsu}{\tsu}
        {\inctx{\vnode{t}{v_1}{v_2'}{v_3'}}}{\vsu_3}{\tsu_3}}
\\[0.05in]
\inference[\renodebadone]
  {
   t = \typeof{v_1} \\
   \triple{\stuck}{\vsu_2}{\tsu_2} = \force{v_2}{\ttree{t}}{\vsu_1}{\tsu_1} \\
  }
  {\step{\inctx{\enode{v_1}{v_2}{v_3}}}{\vsu}{\tsu}
        {\stuck}{\vsu_3}{\tsu_3}}
\quad
\inference[\renodebadtwo]
  {
   t = \typeof{v_1} \\
   \triple{v_2'}{\vsu_2}{\tsu_2} = \force{v_2}{\ttree{t}}{\vsu_1}{\tsu_1} \\
   \triple{\stuck}{\vsu_3}{\tsu_3} = \force{v_3}{\ttree{t}}{\vsu_2}{\tsu_2} \\
  }
  {\step{\inctx{\enode{v_1}{v_2}{v_3}}}{\vsu}{\tsu}
        {\stuck}{\vsu_3}{\tsu_3}}
\\[0.05in]
\inference[\recasegoodone]
  {\thole \mbox{ is fresh} & \triple{\vleaf{t}}{\vsu_1}{\tsu_1} = \force{v}{\ttree{\thole}}{\vsu}{\tsu}
  }
  {\step{\inctx{\ecase{v}{e_1}{x_1}{x_2}{x_3}{e_2}}}{\vsu}{\tsu}
        {\inctx{e_1}}{\vsu_1}{\tsu_1}}
\\[0.05in]
\inference[\recasegoodtwo]
  {\thole \mbox{ is fresh} & \triple{\vnode{t}{v_1}{v_2}{v_3}}{\vsu_1}{\tsu_1} = \force{v_1}{\ttree{\thole}}{\vsu}{\tsu}
  }
  {\step{\inctx{\ecase{v}{e_1}{x_1}{x_2}{x_3}{e_2}}}{\vsu}{\tsu}
        {\inctx{e_2\sub{x_1}{v_1}\sub{x_2}{v_2}\sub{x_3}{v_3}}}{\vsu_1}{\tsu_1}}
\\[0.05in]
\inference[\recasebad]
  {\thole \mbox{ is fresh} & \triple{\stuck}{\vsu_1}{\tsu_1} = \force{v}{\ttree{\thole}}{\vsu}{\tsu}
  }
  {\step{\inctx{\ecase{v}{e_1}{x_1}{x_2}{x_3}{e_2}}}{\vsu}{\tsu}
        {\stuck}{\vsu_1}{\tsu_1}}
\\[0.05in]
\inference[\repcasegood]
  {\thole_1, \thole_2 \mbox{ are fresh} & \triple{\epair{v_1}{v_2}}{\vsu_1}{\tsu_1} = \force{v}{\tprod{\thole_1}{\thole_2}}{\vsu}{\tsu}
  }
  {\step{\inctx{\epcase{v}{x_1}{x_2}{e}}}{\vsu}{\tsu}
        {\inctx{e\sub{x_1}{v_1}\sub{x_2}{v_2}}}{\vsu_1}{\tsu_1}}
\\[0.05in]
\inference[\repcasebad]
  {\thole_1, \thole_2 \mbox{ are fresh} & \triple{\stuck}{\vsu_1}{\tsu_1} = \force{v}{\tprod{\thole_1}{\thole_2}}{\vsu}{\tsu}
  }
  {\step{\inctx{\epcase{v}{x_1}{x_2}{e}}}{\vsu}{\tsu}
        {\stuck}{\vsu_1}{\tsu_1}}
\\[0.05in]
\end{gather*}
\end{minipage}
}
\caption{Evaluation relation for \lang}
\label{fig:operational}
\end{figure}

%
Figure~\ref{fig:operational} describes the small-step contextual
reduction semantics for \lang.
A configuration is a triple $\triple{\estuck}{\vsu}{\tsu}$ of an
expression $e$ or the stuck term $\stuck$, a value substitution $\vsu$,
and a type substitution $\tsu$.
%
%
We write $\step{\estuck}{\vsu}{\tsu}{\estuck'}{\vsu'}{\tsu'}$ if the state
$\triple{\estuck}{\vsu}{\tsu}$ transitions in a \emph{single step} to
$\triple{\estuck'}{\vsu'}{\tsu'}$.
A (finite) \emph{trace} $\trace$ is a sequence of configurations
$\triple{\estuck_0}{\vsu_0}{\tsu_0}, \ldots, \triple{\estuck_n}{\vsu_n}{\tsu_n}$ such that
$\forall 0 \leq i < n$, we have
$\step{\estuck_i}{\vsu_i}{\tsu_i}{\estuck_{i+1}}{\vsu_{i+1}}{\tsu_{i+1}}$.
We write $\steptr{\trace}{\estuck}{\vsu}{\tsu}{\estuck'}{\vsu'}{\tsu'}$ if $\trace$ is
a trace of the form $\triple{\estuck}{\vsu}{\tsu},\ldots,$ $\triple{\estuck'}{\vsu'}{\tsu'}$.
We write \steps{\estuck}{\vsu}{\tsu}{\estuck'}{\vsu'}{\tsu'} if
\steptr{\trace}{\estuck}{\vsu}{\tsu}{\estuck'}{\vsu'}{\tsu'} for some trace $\trace$.

\paragraph{Primitive Reductions}
%
%
Primitive reduction steps --- addition, if-elimination,
function application, and data construction and case
analysis --- use \forcesym to ensure that values have
the appropriate type (and that holes are instantiated)
before continuing the computation.
Importantly, beta-reduction \emph{does not} type-check its
argument, it only ensures that ``the caller'' $v_1$ is indeed
a function.

\paragraph{Recursion}
%
Our semantics lacks a built-in $\mathtt{fix}$ construct
for defining recursive functions, which may surprise
the reader.
Fixed-point operators often cannot be typed in static type
systems, but our system would simply approximate its type
as $\tfun$, apply it, and move along with evaluation.
Thus we can use any of the standard fixed-point operators
and do not need a built-in recursion construct. 




\subsection{Generality}\label{sec:soundness}

A key technical challenge in generating witnesses is
that we have no (static) type information to rely upon.
Thus, we must avoid the trap of generating \emph{spurious}
witnesses that arise from picking irrelevant values, when
instead there exist perfectly good values of a \emph{different}
type under which the program would not have gone wrong.
We now show that our evaluation relation instantiates holes
in a \emph{general} manner. That is, given a lambda-term $f$,
if we have $\steps{\eapp{f}{\vhole{\thole}}}{\emptysu}{\emptysu}{\stuck}{\vsu}{\tsu}$,
then \emph{for every} concrete type $t$, we can find a value
$v$ of type $t$ such that $\eapp{f}{v}$ goes wrong.

\begin{thm}[Witness Generality]
\label{thm:soundness}
  For any lambda-term $f$, if
  \hbox{$\steptr{\trace}{\eapp{f}{\vhole{\thole}}}{\emptysu}{\emptysu}{\stuck}{\vsu}{\tsu}$,}
  then for every
  (inhabited\footnote{All types in \lang are inhabited, but in a larger language like \ocaml this may not be true.})
  type
  $t$ there exists a value $v$ of type $t$ such that
  $\steps{\eapp{f}{v}}{\emptysu}{\emptysu}{\stuck}{\vsu'}{\tsu'}$.
\end{thm}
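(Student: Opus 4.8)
The plan is to prove the theorem by exhibiting, for each concrete type $t$, a concrete reduction sequence that mirrors the given abstract trace $\trace$ closely enough to reach $\stuck$. The conceptual pivot is that evaluation on a hole is the \emph{most general} evaluation: the input hole $\vhole{\thole}$ is refined by \forcesym\ only when the surrounding context \emph{demands} a specific type, and each such demand is recorded by unification in $\tsu$. I would first extract from $\trace$ the finite, ordered list of demands $t'_1,\dots,t'_m$ placed on the input value and its descendant holes, together with the partially-instantiated value $\hat v = \lookupsu{\vsu}{\vhole{\thole}}$ that the hole was eventually filled with (or $\vhole{\thole}$ itself, if it is never forced). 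Let $s_k$ denote the type obtained by unifying the first $k$ demands, so that $s_0 = \thole$ is fully abstract and $s_m$ is the type ultimately inferred for the input.

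The core of the argument is a \emph{simulation lemma} relating abstract and concrete configurations. I would call a configuration $\triple{e_c}{\vsu_c}{\tsu_c}$ a $t$-instance of an abstract one when $e_c$ and $\vsu_c$ are obtained from the abstract versions by a hole-substitution that replaces the input hole (and its descendants) with concrete values whose types refine $t$ and that agree with $\hat v$ on every component already determined by the demands seen so far. The lemma asserts that a single abstract step is matched by a single concrete step into a $t$-instance of the abstract successor \emph{provided} the next demand on the input is consistent with $t$, and that whenever the abstract step is a \forcesym\ on the input whose demand \emph{conflicts} with $t$, the matching concrete step goes to $\stuck$. The proof is a case analysis over the evaluation rules of Figure~\ref{fig:operational}, each case appealing to the definition of \forcesym\ in Figure~\ref{fig:narrow}: for every rule other than those forcing the input, the inspected value is identical in the abstract and concrete configurations, so the same rule fires and the instance relation is preserved; for the rules forcing the input, the behaviour of \forcesym\ on a concrete value of type $t$ agrees with its behaviour on the hole exactly when $t$ is consistent with the demanded type.

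With the lemma in hand I would finish by a case split on $t$. Let $k^{*}$ be the largest index such that $t$ unifies with $s_{k^{*}}$. If $k^{*} = m$, then $t$ is consistent with the entire inferred input type, so I pick $v$ to be $\hat v$ with its remaining internal holes instantiated so that its type is exactly $t$ (those holes were never forced and so never influence control flow); repeatedly applying the simulation lemma replays all of $\trace$ concretely and lands in $\stuck$. If $k^{*} < m$, I choose a $v$ of type $t$ that coincides with $\hat v$ on every component determined by $s_{k^{*}}$ (possible since $t$ unifies with $s_{k^{*}}$); because the abstract and concrete inputs then agree on every position inspected before demand $k^{*}+1$, all intermediate values---and hence all control-flow choices---coincide, and the simulation lemma replays the trace up to that demand. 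At demand $k^{*}+1$, which conflicts with $t$ by maximality of $k^{*}$, the concrete \forcesym\ yields $\stuck$. Since every choice made for the concrete run is an allowed non-deterministic choice of \gensym, the result is a legal trace $\steps{\eapp{f}{v}}{\emptysu}{\emptysu}{\stuck}{\vsu'}{\tsu'}$, and inhabitedness of $t$ is precisely what guarantees such a $v$ exists.

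The step I expect to be the main obstacle is the simulation lemma, and within it the bookkeeping required to keep the two runs synchronized in the presence of \emph{compound and nested} holes: when the input is narrowed to a pair or a tree, fresh descendant holes are spawned and later narrowed independently, so the demands on the input form a tree-structured, incrementally-revealed object rather than a flat list, and the instance relation must track exactly which sub-holes have been pinned down. I would manage this by letting the instance relation depend on $\tsu_c$, so that ``determined so far'' is read off the running type substitution, and by strengthening the induction hypothesis to assert that $\hat v$ and the concrete input coincide on exactly the sub-terms whose positions are ground in the current inferred type. That invariant is what licenses the crucial claim that no intermediate value, and hence no control-flow decision, can diverge before the first conflicting demand.
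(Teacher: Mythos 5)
Your proposal is correct and follows essentially the same route as the paper: the paper likewise splits on whether $t$ is compatible with the dynamically inferred partial input type $\ptype{\trace}{f}$ (your $k^{*}=m$ versus $k^{*}<m$ cases), reuses the trace's own instantiation $\resolve{\vhole{\thole}}{\vsu}$ in the compatible case, and in the incompatible case (its Lemma~\ref{lem:k-stuck}) locates the first step at which some hole $\thole'$ is unified with a type conflicting with $t$, then shows by case analysis on the evaluation rules that binding $\thole'$ to a conflicting concrete type makes that very step go to $\stuck$. Your explicit simulation lemma with the $t$-instance relation is a more self-contained packaging of the replay argument that the paper distributes across its preservation lemmas (Lemmas~\ref{lem:vsu-ext} and~\ref{lem:resolve-compat}) and the value construction inside Lemma~\ref{lem:k-stuck}.
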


We need to develop some machinery in order to prove this theorem.
First, we show how our evaluation rules encode a dynamic form of
type inference, and then we show that the witnesses found by
evaluation are indeed maximally general.

\paragraph{The Type of a Value} The \emph{dynamic type}
of a value $v$ is defined as a function $\typeof{v}$ shown
in Figure~\ref{fig:typeof}.
The types of primitive values are defined in the natural manner.
The types of functions are \emph{approximated}, which is all
that is needed to ensure an application does not get stuck.
For example,
$$\typeof{\efun{x}{\eplus{x}{1}}} = \tfun$$
instead of $\tint \rightarrow \tint$.
The types of tuples are obtained directly from their values, and the
types of (polymorphic) trees from their labels.
Note that the evaluation relation in Figure~\ref{fig:operational}
guarantees that tree \emph{values} will be annotated with their type.
For nodes the type can be taken from the type of its value $v_1$, but
for leaves the evaluation relation creates a new type hole $\thole$
(this corresponds to polymorphic instantiation in a typechecker).

\begin{figure}[t]
\[ \begin{array}{lcll}
    \typeof{n}   & \defeq & \tint & \\
    \typeof{b}   & \defeq & \tbool & \\
    \typeof{\efun{x}{e}} & \defeq & \tfun \\
    \typeof{\epair{v_1}{v_2}} & \defeq & \tprod{\typeof{v_1}}{\typeof{v_2}} \\
    \typeof{\vleaf{t}} & \defeq & \ttree{t} \\
    \typeof{\vnode{t}{v_1}{v_2}{v_3}} & \defeq & \ttree{t} \\
    \typeof{\vhole{\thole}} & \defeq & \thole \\
  \end{array} \]
\caption{The \emph{dynamic type} of a value.}
\label{fig:typeof}
\end{figure}

\paragraph{Dynamic Type Inference}
We can think of the evaluation of \eapp{f}{\vhole{\thole}}
as synthesizing a partial instantiation of \thole, and thus
\emph{dynamically inferring} a (partial) type for $f$'s input.
We can extract this type from an evaluation trace by
applying the final type substitution to \thole.
%
Formally, we say that if
$\steptr{\trace}{\eapp{f}{\vhole{\thole}}}{\emptysu}{\emptysu}{\estuck}{\vsu}{\tsu}$,
then the \emph{partial input type} of $f$ up to $\trace$, written
\ptype{\trace}{f}, is $\resolve{\thole}{\tsu}$.

\paragraph{Compatibility}
A type $s$ is \emph{compatible} with a type $t$, written \tcompat{s}{t},
if $\exists \tsu.\ \subst{\tsu}{s} = \subst{\tsu}{t}$.
That is, two types are compatible if there exists a type substitution
that maps both types to the same type.
A value $v$ is \emph{compatible} with a type $t$, written \vcompat{v}{t},
if $\tcompat{\typeof{v}}{t}$, that is, if the dynamic type of $v$ is
compatible with $t$.


\paragraph{Type Refinement}
A type $s$ is a \emph{refinement} of a type $t$, written $\tsub{s}{t}$,
if $\exists \theta. s = \subst{\theta}{t}$.
In other words, $s$ is a refinement of $t$ if there exists a type
substitution that maps $t$ directly to $s$.
A type $t$ is a \emph{refinement} of a value $v$, written $\tsub{t}{v}$,
if $\tsub{t}{\typeof{v}}$, \ie if $t$ is a refinement of the
dynamic type of $v$.
%


\paragraph{Preservation}
We prove two preservation lemmas. First, we show that each evaluation
step refines the partial input type of $f$, thus preserving type
compatibility.
\begin{lem}
\label{lem:vsu-ext}
If  $\trace \defeq \triple{\eapp{f}{\vhole{\thole}}}{\emptysu}{\emptysu},\ldots,\triple{e}{\vsu}{\tsu}$
and $\trace' \defeq \trace, \step{e}{\vsu}{\tsu}{e'}{\vsu'}{\tsu'}$
  (\ie\ $\trace'$ is a single-step extension of $\trace$)
and $\ptype{\trace}{f} \neq \ptype{\trace'}{f}$
then $\tsu' = \tsu[\thole_1 \mapsto t_1] \ldots [\thole_n \mapsto t_n]$.
\end{lem}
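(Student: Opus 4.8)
The plan is to prove the lemma by a case analysis on the reduction rule that justifies the single step $\step{e}{\vsu}{\tsu}{e'}{\vsu'}{\tsu'}$, since the statement concerns only one transition and no induction over the trace is required. The prefix $\trace$ and the starting configuration $\triple{\eapp{f}{\vhole{\thole}}}{\emptysu}{\emptysu}$ enter the argument only through the definitions $\ptype{\trace}{f} = \resolve{\thole}{\tsu}$ and $\ptype{\trace'}{f} = \resolve{\thole}{\tsu'}$; the real work is to show that every rule either leaves $\tsu$ untouched or replaces it with a pure extension.

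First I would isolate the only mechanism by which a step can alter the type substitution. Inspecting Figure~\ref{fig:operational}, every rule that produces a $\tsu' \neq \tsu$ does so through one or more calls to \forcesym, and, inspecting Figure~\ref{fig:narrow}, every modification \forcesym\ makes to its type-substitution argument is the result of a call to \unify; the value generator \gensym\ consumes the already-unified substitution and only manufactures fresh value holes, so it never extends $\tsu$ on its own. Hence the whole lemma reduces to one structural fact about unification: if $\unify{E}{\tsu} = \tsu'$ then $\tsu' = \tsu[\thole_1 \mapsto t_1] \ldots [\thole_n \mapsto t_n]$ for some holes $\thole_i$ and types $t_i$. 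This is the standard property that (idempotent) unification against a current substitution yields an \emph{extension} of that substitution rather than a revision of it, and I would establish it by the usual induction on the unification algorithm, using the paper's invariant that substitutions are kept self-applied so that the bindings added lie outside the part of the domain already resolved by $\tsu$.

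With this sub-lemma in hand the case analysis is routine. I would dispatch the rules whose conclusion carries $\tsu$ unchanged --- for instance \releafgood, together with the reflexive base cases of \forcesym\ such as $\force{n}{\tint}{\vsu}{\tsu}$ --- by noting that there $\resolve{\thole}{\tsu} = \resolve{\thole}{\tsu'}$, so the hypothesis $\ptype{\trace}{f} \neq \ptype{\trace'}{f}$ is not met and there is nothing to prove; this hypothesis is exactly what lets us restrict attention to the genuinely type-refining steps (and guarantees $n \geq 1$). For the remaining rules (\replusgood, the if- and application-rules, and the tree/pair construction and case-analysis rules such as \renodegood, \recasegoodone, and \repcasegood) I would compose the extension property of \unify\ along the one or two \forcesym\ calls each rule performs, observing that a finite composition of extensions is again an extension, to recover $\tsu' = \tsu[\thole_1 \mapsto t_1] \ldots [\thole_n \mapsto t_n]$.

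The main obstacle I anticipate is the unification sub-lemma itself, specifically reconciling the ``only adds new bindings'' reading with the normalization invariant that forces the substitution to be re-applied to itself on every extension. I would handle this by phrasing the extension claim semantically --- $\tsu'$ agrees with $\tsu$ on the resolution of every hole already in $dom(\tsu)$ and in addition resolves at least one previously free hole --- rather than as a literal syntactic concatenation, and then checking that the self-application performed when extending preserves this agreement. A secondary piece of bookkeeping is the fresh type holes introduced by \releafgood, \recasegoodone/\recasegoodtwo, and \repcasegood: I must confirm that each is chosen outside $dom(\tsu)$, so that binding it during the subsequent \unify\ genuinely extends, rather than clobbers, the incoming substitution.
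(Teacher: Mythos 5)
Your proposal is correct and follows essentially the same route as the paper's proof: a case analysis on the evaluation rules, reducing everything to the observation that only \forcesym\ can modify $\tsu$, that it does so only through \unifysym, and that unification can only extend the substitution. The paper states these three facts in as many lines and stops there; your additional detail --- the explicit unification sub-lemma, the composition of extensions across multiple \forcesym\ calls, and the care about self-application and fresh holes --- is exactly the bookkeeping the paper leaves implicit.
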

\begin{proof}
  By case analysis on the evaluation rules.
  $\thole$ does not change, so if the partial input types differ then
  $\tsu \neq \tsu'$.
  Only \forcesym\ can change \tsu,
  via \unifysym, which can only extend \tsu.
  %
\end{proof}
Second, we show that at each step of evaluation, the partial input type of $f$
is a refinement of the instantiation of $\vhole{\thole}$.
\begin{lem}
\label{lem:resolve-compat}
For all traces
$\trace \defeq \triple{\eapp{f}{\vhole{\thole}}}{\emptysu}{\emptysu},\ldots,\triple{e}{\vsu}{\tsu}$,
$\vsub{\ptype{\trace}{f}}{\resolve{\vhole{\thole}}{\vsu}}$.
\end{lem}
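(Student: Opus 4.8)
The plan is to prove the statement by induction on the length of the trace $\trace$, tracking how each single step can alter the two substitutions $\vsu$ and $\tsu$. For the base case the trace is the singleton $\triple{\eapp{f}{\vhole{\thole}}}{\emptysu}{\emptysu}$, so that $\ptype{\trace}{f} = \resolve{\thole}{\emptysu} = \thole$ while $\resolve{\vhole{\thole}}{\emptysu} = \vhole{\thole}$ with $\typeof{\vhole{\thole}} = \thole$; the required refinement $\tsub{\thole}{\thole}$ then holds via the empty substitution.

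For the inductive step, suppose the claim holds for $\trace$ ending in $\triple{e}{\vsu}{\tsu}$ and consider a one-step extension to $\triple{e'}{\vsu'}{\tsu'}$. The only way an evaluation rule in Figure~\ref{fig:operational} modifies $\vsu$ or $\tsu$ is through $\forcesym$, so I would case-split on which branch of $\forcesym$ fired (reading off Figure~\ref{fig:narrow}): (i) $\vsu$ and $\tsu$ are both unchanged; (ii) only $\tsu$ is extended via $\unifysym$ (forcing a compound value, or re-narrowing an already-instantiated hole); or (iii) both $\vsu$ and $\tsu$ are extended, i.e.\ the first-time instantiation of a hole. Throughout I would appeal to Lemma~\ref{lem:vsu-ext} together with the paper's standing assumption that substitutions are idempotent, which guarantees that $\tsu'$ only ever extends $\tsu$ and never rebinds an existing hole.

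Case (i) is immediate, since both $\resolve{\thole}{\cdot}$ and $\resolve{\vhole{\thole}}{\cdot}$ are unchanged and the goal is exactly the induction hypothesis. Case (ii) is a monotonicity argument: because $\vsu' = \vsu$, the value side $\resolve{\vhole{\thole}}{\vsu'}$ is fixed, and since $\tsu' = \tsu[\thole_1 \mapsto t_1]\cdots[\thole_n \mapsto t_n]$ is an idempotent extension $\delta$, resolving $\thole$ under $\tsu'$ equals $\subst{\delta}{\resolve{\thole}{\tsu}}$. Writing the hypothesis as $\resolve{\thole}{\tsu} = \subst{\theta}{\typeof{\resolve{\vhole{\thole}}{\vsu}}}$, the composite $\subst{\delta}{\subst{\theta}{\cdot}}$ witnesses $\tsub{\resolve{\thole}{\tsu'}}{\typeof{\resolve{\vhole{\thole}}{\vsu'}}}$, using that the refinement relation is closed under composing substitutions.

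The real work is in case (iii). Here $\forcesym$ instantiates some hole $\vhole{\thole_h}$ (possibly $\vhole{\thole}$ itself, possibly a sub-hole reached while computing $\resolve{\vhole{\thole}}{\cdot}$) by setting $\tsu' = \unify{\{\thole_h,t\}}{\tsu}$ and $\vsu' = \extendsu{\vsu}{\vhole{\thole_h}}{v}$ with $v = \gen{t}{\tsu'}$. The crux is a sub-lemma about $\gensym$: that the generated value is type-consistent with the type at which it was generated, i.e.\ $\subst{\tsu'}{\typeof{\gen{t}{\tsu'}}}$ coincides with $\subst{\tsu'}{t}$, up to the fresh and as-yet-unconstrained holes that $\gensym$ plants in the function, product, and tree clauses. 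I would establish this by structural induction on $t$ following Figure~\ref{fig:gen}. Combined with the unification $\unify{\{\thole_h,t\}}{\tsu}$, this makes the new binding $\vhole{\thole_h}\mapsto v$ consistent with the refinement of $\thole_h$ recorded in $\tsu'$, so the per-hole consistency invariant is preserved and, recursively, $\resolve{\thole}{\tsu'}$ remains a refinement of $\typeof{\resolve{\vhole{\thole}}{\vsu'}}$. I expect this case to be the main obstacle --- in particular the non-deterministic tree clause of $\gensym$ and the bookkeeping for its fresh holes, which appear in the value's dynamic type yet are unconstrained in $\tsu'$ --- while the remaining steps are routine substitution algebra.
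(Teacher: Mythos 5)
Your proposal is correct and follows essentially the same route as the paper: induction on the trace, with the inductive step discharged by analyzing how the single step can change $\vsu$ and $\tsu$ through \forcesym\ (unchanged, $\tsu$-only extension via \unifysym, or fresh hole instantiation via \gensym), using Lemma~\ref{lem:vsu-ext} and closure of refinement under substitution. The paper organizes the case analysis by evaluation rule and then defers each to the relevant branches of \forcesym, whereas you case directly on the \forcesym\ branches --- a harmless reorganization of the same argument, and your identification of the \gensym-consistency sub-lemma as the crux of the instantiation case matches the paper's (terser) treatment of that case.
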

\begin{proof}
  By induction on $\trace$.
  In the base case $\trace = \triple{\eapp{f}{\vhole{\thole}}}{\emptysu}{\emptysu}$
  and $\thole$ trivially refines $\vhole{\thole}$.
  In the inductive case, consider the single-step extension of $\trace$,
  $\trace' = \trace,\triple{e'}{\vsu'}{\tsu'}$.
  We show by case analysis on the evaluation rules that if
  $\vsub{\ptype{\trace}{f}}{\resolve{\vhole{\thole}}{\vsu}}$, then
  $\vsub{\ptype{\trace'}{f}}{\resolve{\vhole{\thole}}{\vsu'}}$.
\end{proof}

\paragraph{Incompatible Types Are Wrong}
\emph{For all} types that are \emph{incompatible} with the
partial input type up to $\trace$, there exists a value
that will cause $f$ to get stuck in \emph{at most} $k$ steps,
where $k$ is the length of $\trace$.

\begin{lem}
\label{lem:k-stuck}
For all types $t$,
if $\steptr{\trace}{\eapp{f}{\vhole{\thole}}}{\emptysu}{\emptysu}{e}{\vsu}{\tsu}$ and
   $\tincompat{t}{\ptype{\trace}{f}}$,
   then there exists a $v$ such that $\hastype{v}{t}$ and
   $\steps{\eapp{f}{v}}{\emptysu}{\emptysu}{\stuck}{\vsu'}{\tsu'}$ in at most
   $k$ steps, where $k$ is the length of $\trace$.
\end{lem}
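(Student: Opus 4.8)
The plan is to induct on the length $k$ of the trace $\trace$, reducing the claim at each stage to a single \emph{critical} narrowing step and then transferring that step to a concrete run on a value of type $t$. When $k = 0$ the trace is the singleton $\triple{\eapp{f}{\vhole{\thole}}}{\emptysu}{\emptysu}$, so $\ptype{\trace}{f} = \resolve{\thole}{\emptysu} = \thole$, which is compatible with every type; hence $\tincompat{t}{\ptype{\trace}{f}}$ is impossible and the claim holds vacuously. For the inductive step, let $\trace''$ be the length-$(k-1)$ prefix of $\trace$, ending in $\triple{e''}{\vsu''}{\tsu''}$, and write $s'' = \ptype{\trace''}{f}$. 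Refinement can only \emph{lose} compatibility, and $\ptype{\trace}{f}$ is a refinement of $s''$ by Lemma~\ref{lem:vsu-ext}; so if $t$ is already incompatible with $s''$, the induction hypothesis applied to $\trace''$ yields a $v$ with $\hastype{v}{t}$ and $\eapp{f}{v}$ stuck in at most $k-1 \le k$ steps, and we are done. The interesting case is when $t$ is compatible with $s''$ but not with $\ptype{\trace}{f}$, so the final step is exactly where compatibility is destroyed.

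\textbf{The critical step.} Here the partial input type changes at the last step, so by Lemma~\ref{lem:vsu-ext} that step runs $\forcesym$ and its $\unifysym$ extends $\tsu''$ to $\tsu$. Renaming so that $t$ shares no variables with $\ptype{\trace}{f}$ (harmless, since we only need \emph{a} value of type $t$), incompatibility over \lang's finite types is witnessed by a rigid head-constructor clash at some type position $p$: the last step narrows the resolution of $\thole$ at $p$ to a concrete constructor $c$ that differs from the constructor $t$ carries at $p$, whereas $s''$ still had a variable at $p$ (which is why compatibility held up to $\trace''$). I then build the witness $v$ of type $t$ that agrees with the revealed input $\resolve{\vhole{\thole}}{\vsu''}$ at every position that substitution defines, carries $t$'s constructor at $p$, and uses arbitrary $t$-typed values elsewhere. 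Such a $v$ exists because $t$ is compatible with $\typeof{\resolve{\vhole{\thole}}{\vsu''}}$: this follows from $t$ being compatible with $s''$ together with $\vsub{s''}{\resolve{\vhole{\thole}}{\vsu''}}$ (Lemma~\ref{lem:resolve-compat}), since the less-refined dynamic type of the revealed input carries only fresh variables where $s''$ has structure.

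\textbf{Transfer to the concrete run.} Finally I show that $\eapp{f}{v}$ reproduces the first $k-1$ steps of $\trace''$ and then gets stuck. Because $v$ agrees with $\resolve{\vhole{\thole}}{\vsu''}$ on all revealed structure, every discriminating reduction — the branch taken by an if-expression, the leaf/node choice in a tree case-analysis, the projection in a pair case-analysis — is resolved identically, so there is a concrete run mirroring $\trace''$ reduction-for-reduction. The one place $v$ disagrees with the revealed input is position $p$, which is \emph{unrevealed} before the last step and therefore never inspected during those $k-1$ steps. At step $k$ the same redex forces the sub-value at $p$ to a type with head $c$; but that sub-value now has head constructor coming from $t$, so $\forcesym$ falls through to its catch-all clause in Figure~\ref{fig:narrow} and yields $\stuck$. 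Hence $\steps{\eapp{f}{v}}{\emptysu}{\emptysu}{\stuck}{\vsu'}{\tsu'}$ in exactly $k$ steps with $\hastype{v}{t}$, completing the induction.

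\textbf{Main obstacle.} The technical heart is the simulation claim — that a concrete input agreeing with the revealed hole-instantiation drives $f$ along the same reduction path — which needs a uniform case analysis over every rule in Figure~\ref{fig:operational}, tracking the correspondence between the type position $p$ and the value position that each $\forcesym$ inspects, and verifying that the final forcing genuinely clashes (landing in the catch-all case of Figure~\ref{fig:narrow}) rather than silently re-using an earlier instantiation. The two supporting facts that make this go through are the rigid-clash characterization of incompatibility for \lang's finite, fresh-renamed types, and the monotonicity of incompatibility under refinement used in the inductive split; both are routine but must be stated explicitly before the case analysis.
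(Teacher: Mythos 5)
Your overall strategy is the same as the paper's: isolate the first step at which the partial input type becomes incompatible with $t$ (your induction on trace length is a reformulation of the paper's ``shortest incompatible prefix'' $\trace_i$), build a $t$-typed value agreeing with the hole instantiations revealed so far, replay the trace on that value, and argue that the replay gets stuck at the critical step. Your witness is essentially the paper's $v = \resolve{\vhole{\thole}}{\vsu_{i-1}}\sub{\ehole'[\thole']}{u}\sub{\thole'}{s}$, and your explicit simulation argument for the first $k-1$ steps is a point where you are actually more careful than the paper, which asserts the replay claim without argument.

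However, one of the two ``supporting facts'' you rest the critical step on is false: incompatibility of a ground $t$ with the partial input type is \emph{not} always witnessed by a rigid head-constructor clash introduced at the critical step. Since $t$ is ground and renamed apart, incompatibility means the partial type fails to \emph{match} $t$, and this can happen in two ways: a rigid clash at some position, or a \emph{repeated hole} that $t$ instantiates at two different types. The critical step can create the second situation by merely unifying two holes, introducing no concrete constructor anywhere. Concretely, take $f = \efun{p}{\epcase{p}{a}{b}{\enode{a}{(\enode{b}{\eleaf}{\eleaf})}{\eleaf}}}$ and $t = \tprod{\tint}{\tbool}$. After the pair is revealed the partial type is $\tprod{\thole_1}{\thole_2}$, compatible with $t$; the outer \renodegood\ step then narrows the inner node (annotated $\thole_2$) against $\ttree{\thole_1}$, and \unifysym\ merely identifies $\thole_2$ with $\thole_1$. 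The partial type becomes $\tprod{\thole_1}{\thole_1}$: incompatible with $t$, yet neither position carries a rigid constructor, so your position $p$ and constructor $c$ do not exist and your case analysis has nothing to fire on. Your witness $\epair{0}{\etrue}$ is still correct --- the concrete run gets stuck because narrowing a $\ttree{\tbool}$ value against $\ttree{\tint}$ fails --- but the reason is a clash between the types of \emph{two} sub-values of $v$ that the program forces to agree, not a clash between one sub-value and a rigid type produced by the step. The repair is to instantiate \emph{all} holes of $\ptype{\trace''}{f}$ according to the matching substitution witnessing $\tcompat{t}{s''}$ and show the critical step's call to \unifysym\ fails under that substitution. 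To be fair, the paper's own proof is loose at exactly this point as well --- it substitutes a single critical hole $\thole'$ by a concrete $s$ with $\ptype{\trace_{i-1}}{f}\sub{\thole'}{s} = t$, which likewise does not exist in the example above --- but your version elevates the false characterization to a named lemma and builds the case analysis on it, so the gap would surface as soon as you reached rules like \renodegood\ whose narrowing compares two type annotations rather than a value against a rigid type.
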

\begin{proof}
  We can construct $v$ from $\trace$ as follows.
  Let
  $$
  \trace_i = \triple{\eapp{f}{\vhole{\thole}}}{\emptysu}{\emptysu},
             \ldots,
             \triple{e_{i-1}}{\vsu_{i-1}}{\tsu_{i-1}},
             \triple{e_{i}}{\vsu_{i}}{\tsu_{i}}
  $$
  be the shortest prefix of $\trace$ such that
  $\tincompat{\ptype{\trace_i}{f}}{t}$.
  We will show that $\ptype{\trace_{i-1}}{f}$ 
  must contain some other hole $\thole'$ that is
  instantiated at step $i$.
  Furthermore, $\thole'$ is instantiated in such a way that
  $\tincompat{\ptype{\trace_i}{f}}{t}$.
  %
  Finally, we will show that if we had instantiated $\thole'$ such that
  $\tcompat{\ptype{\trace_i}{f}}{t}$,
  the current step would have gotten $\stuck$.


  %
  By Lemma~\ref{lem:vsu-ext} we know that
  $\tsu_{i} = \tsu_{i-1}[\thole_1 \mapsto t_1] \ldots [\thole_n \mapsto t_n]$.
  We will assume, without loss of generality, that
  $\tsu_{i} = \tsu_{i-1}[\thole' \mapsto t']$.
  Since $\tsu_{i-1}$ and $\tsu_{i}$ differ only in $\thole'$ but the resolved
  types differ, we have
  $\thole' \in \ptype{\trace_{i-1}}{f}$
  and
  $\ptype{\trace_i}{f} = \ptype{\trace_{i-1}}{f}\sub{\thole'}{t'}$.
  %
  Let $s$ be a
  concrete type such that $\ptype{\trace_{i-1}}{f}\sub{\thole'}{s} = t$.
  We show by case analysis on the evaluation rules that
  $$\step{e_{i-1}}{\vsu_{i-1}}{\tsu_{i-1}[\thole' \mapsto s]}{\stuck}{\vsu}{\tsu}$$

  Finally, by Lemma~\ref{lem:resolve-compat} we know that
  $\vsub{\ptype{\trace_{i-1}}{f}}{\resolve{\vhole{\thole}}{\vsu_{i-1}}}$
  and thus $\thole' \in \resolve{\vhole{\thole}}{\vsu_{i-1}}$.
  %
  Let
  $$
  \begin{array}{lcl}
  u &=& \gen{s}{\tsu} \\
  v &=& \resolve{\vhole{\thole}}{\vsu_{i-1}}\sub{\ehole'[\thole']}{u}\sub{\thole'}{s} \\
  \end{array}
  $$
  $\steps{\eapp{f}{v}}{\emptysu}{\emptysu}{\stuck}{\vsu}{\tsu}$ in $i$ steps.
  %
  %
  %
\end{proof}

\begin{proof}[\textbf{Proof of Theorem~\ref{thm:soundness}}]
Suppose $\trace$ witnesses that $f$ gets stuck,
and let $s = \ptype{\trace}{f}$.
We show that \emph{all} types $t$ have stuck-inducing
values by splitting cases on whether $t$ is
compatible with $s$. 
\begin{description}
\item [Case \tcompat{s}{t}:]
  Let $\trace = \triple{\eapp{f}{\vhole{\thole}}}{\emptysu}{\emptysu},\ldots,\triple{\stuck}{\vsu}{\tsu}$.
  The value $v = \resolve{\vhole{\thole}}{\vsu}$ demonstrates that
  $\eapp{f}{v}$ gets stuck.
\item [Case \tincompat{s}{t}:] By Lemma~\ref{lem:k-stuck}, we can derive
  a $v$ from $\trace$ such that $\hastype{v}{t}$ and $\eapp{f}{v}$ gets
  stuck.
\end{description}
\end{proof}

\subsection{Search Algorithm}
\label{sec:search-algorithm}
So far, we have seen how a trace leading to a stuck configuration yields
a general witness demonstrating that the program is ill-typed (\ie\ goes
wrong for at least one input of every type).
In particular, we have shown how to non-deterministically find a witnesses
for a function of a \emph{single} argument.

We must address two challenges
to convert the semantics into a \emph{procedure} for finding
witnesses.
First, we must resolve the non-determinism introduced by \gensym.
Second, in the presence of higher-order functions and currying,
we must determine how many concrete values to generate to make
execution go wrong (as we cannot rely upon static typing to
provide this information.)

The witness generation procedure $\genWitnessN$ is formalized in
Figure~\ref{fig:algo-gen-witness}.
Next, we describe its input and output, and how it
addresses the above challenges to search the space of possible
executions for general type error witnesses.

\paragraph{Inputs and Outputs}
The problem of generating inputs is undecidable in general.
Our witness generation procedure takes two inputs:
(1) a search bound $k$ which is used to define the \emph{number} of
traces to explore\footnote{We assume, without loss of generality, that all
traces are finite.} and
(2) the target expression $e$ that contains the type error
(which may be a curried function of multiple arguments).
The witness generation procedure returns a list of (general)
witness expressions, each of which is of the form $e\ v_1 \ldots v_n$.
The \emph{empty} list is returned when no witness can be found after
exploring $k$ traces.

\paragraph{Modeling Semantics}
We resolve the non-determinism in the operational semantics
(\S~\ref{sec:semantics}) via the procedure
$$
\evalN : e \rightarrow \triple{v \cup \stuck}{\vsu}{\tsu}^{*}
$$
Due to the non-determinism introduced by \gensym, a call
$\evalfn{e}$ returns a \emph{list}
of possible results of the form $\triple{v \cup \stuck}{\vsu}{\tsu}$
such that $\steps{e}{\emptysu}{\emptysu}{v \cup \stuck}{\vsu}{\tsu}$.

\paragraph{Currying}
We address the issue of currying by defining a procedure \genArgs{e},
defined in Figure~\ref{fig:algo-gen-args}, that takes as input an
expression $e$ and produces a \emph{saturated} expression of the form
$\eapp{e}{\ehole_1[\thole_1] \ldots \ehole_n[\thole_n]}$ that
\emph{does not} evaluate to a lambda.
This is achieved with a simple loop that keeps adding holes to the
target application until evaluating the term yields a non-lambda value.
%
%

\begin{figure}[t]
$$
\begin{array}{lclr}
\genArgsN   & : & e \rightarrow e \\
\genArgs{e} & = & \mbox{\textbf{case }} \evalfn{e} \mbox{\textbf{ of}} \\
 \quad \triple{\efun{x}{e}}{\vsu}{\tsu},\ldots &\rightarrow& \genArgs{\eapp{e}{\vhole{\thole}}} & (\ehole, \thole \mbox{ are fresh}) \\
 \quad \_ &\rightarrow& e \\
\end{array}
$$
\caption{Generating a saturated application.}
\label{fig:algo-gen-args}

\end{figure}

\paragraph{Generating Witnesses}
Finally, Figure~\ref{fig:algo-gen-witness} summarizes the overall
implementation of our search for witnesses with the procedure
$\genWitness{k}{e}$, which takes as input a bound $k$ and the
target expression $e$, and returns a list of witness expressions
$\eapp{e}{v_1 \ldots v_n}$ that demonstrate how the input program
gets stuck.
The search proceeds as follows.
\begin{enumerate}
  \item We invoke $\genArgs{e}$ to produce a \emph{saturated}
        application $e_{sat}$.

  \item We take the first $k$ traces returned by $\evalN$
        on the target $e_{sat}$, and

  \item We extract the substitutions corresponding to the
        $\stuck$ traces, and use them to return the list
        of witnesses.
\end{enumerate}
We obtain the following corollary of Theorem~\ref{thm:soundness}:

\begin{cor}[Witness Generation]
\label{thm:generation}
  If $\genWitness{k}{e} = \triple{\eapp{e}{v_1 \ldots v_n}}{\vsu}{\tsu},\ldots$
  then for all types $t_1 \ldots t_n$ there exist values $w_1:t_1\ \ldots\ w_n:t_n$ such that
  $\steps{\eapp{e}{w_1 \ldots w_n}}{\emptysu}{\emptysu}{\stuck}{\vsu'}{\tsu'}$.
\end{cor}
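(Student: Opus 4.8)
The plan is to obtain the corollary by lifting Theorem~\ref{thm:soundness} from a single input hole to the $n$ holes that currying introduces. First I would unfold the two procedures involved. By definition $\genWitness{k}{e}$ begins by calling $\genArgs{e}$, which (Figure~\ref{fig:algo-gen-args}) repeatedly appends fresh holes until the application no longer reduces to a lambda, producing a saturated term $e_{sat} = \eapp{e}{\vhole{\thole_1} \ldots \vhole{\thole_n}}$ with $\thole_1, \ldots, \thole_n$ fresh and pairwise distinct. It then takes the first $k$ results of $\evalfn{e_{sat}}$ and keeps the stuck ones. Hence the hypothesis that $\genWitness{k}{e}$ returns $\triple{\eapp{e}{v_1 \ldots v_n}}{\vsu}{\tsu}$ is exactly the statement that there is a trace $\trace$ with $\steptr{\trace}{\eapp{e}{\vhole{\thole_1} \ldots \vhole{\thole_n}}}{\emptysu}{\emptysu}{\stuck}{\vsu}{\tsu}$, where each returned value is the resolved hole $v_i = \resolve{\vhole{\thole_i}}{\vsu}$. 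This is precisely the $n$-hole analogue of the antecedent of Theorem~\ref{thm:soundness}.

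Next I would observe that the entire apparatus supporting Theorem~\ref{thm:soundness} is insensitive to there being a single distinguished input hole. I would redefine the partial input type as the vector $\langle s_1, \ldots, s_n\rangle$ with $s_i = \resolve{\thole_i}{\tsu}$, and re-examine Lemmas~\ref{lem:vsu-ext}, \ref{lem:resolve-compat}, and~\ref{lem:k-stuck}. Their proofs are case analyses on the evaluation rules of Figure~\ref{fig:operational} together with an induction on $\trace$; none of them uses uniqueness of the distinguished hole, only that it is fresh and that \forcesym\ refines holes monotonically through \unifysym. Since $\thole_1, \ldots, \thole_n$ are fresh and pairwise distinct and narrowing acts on each lazily and independently, the three lemmas carry over coordinate-wise essentially verbatim: each step still only extends $\tsu$, each $s_i$ remains a refinement of $\resolve{\vhole{\thole_i}}{\vsu}$, and for any coordinate $i$ and any type incompatible with $s_i$ there is a shorter prefix that can be driven \stuck\ through $\thole_i$.

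With this in hand, the proof of Theorem~\ref{thm:soundness} replays coordinate by coordinate. Fix arbitrary target types $t_1, \ldots, t_n$. For each $i$ I split on whether $t_i$ is compatible with $s_i$. When $\tcompat{s_i}{t_i}$, I take $w_i$ to be $\resolve{\vhole{\thole_i}}{\vsu}$ further instantiated along the common unifier so that $\hastype{w_i}{t_i}$, exactly as in the compatible case of Theorem~\ref{thm:soundness}; these coordinates reproduce the original narrowing and keep the trace heading toward its original stuck configuration. When $\tincompat{s_i}{t_i}$, the lifted Lemma~\ref{lem:k-stuck} supplies a value $w_i$ with $\hastype{w_i}{t_i}$ that forces a stuck step at or before the point where $\thole_i$ would otherwise have been narrowed compatibly. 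Because \lang\ types are inhabited, every $t_i$ is realizable, so the vector $(w_1, \ldots, w_n)$ is well defined and yields a saturated application $\eapp{e}{w_1 \ldots w_n}$ that reduces to \stuck, which is the conclusion.

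The step I expect to be the main obstacle is this last one: arguing that the per-coordinate choices genuinely compose into a single stuck execution rather than merely $n$ separate single-argument arguments. The delicate content is a locality/monotonicity property of narrowing --- that instantiating one (or several) input holes to values incompatible with their inferred partial types can only cause the computation to get stuck at or before the original stuck point, never to ``escape'' it, and that the lazy, independent narrowing of distinct input holes never lets a compatible coordinate undo another coordinate's failure. Concretely this amounts to re-deriving the ``construct a shorter stuck trace'' core of Lemma~\ref{lem:k-stuck} when a whole vector of holes is re-instantiated simultaneously, and checking that the earliest forced incompatibility still yields a stuck step regardless of how the remaining, compatible coordinates are filled in.
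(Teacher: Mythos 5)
Your plan reaches the statement by a much heavier road than the paper, and the step you single out as the main obstacle is precisely the step the paper's proof is engineered never to face. The paper's entire proof is a one-line reduction: given the curried $e$ of $n$ arguments, define $f'$ to be its \emph{uncurried} version taking all of its arguments as a single nested pair, and apply Theorem~\ref{thm:soundness} to $f'$ as a black box. Because \lang\ already has products, a vector of inputs \emph{is} one input: instantiating the theorem's ``for every inhabited type $t$'' with $t = \tprod{t_1}{(\tprod{t_2}{\cdots})}$ yields a single value of that type, which is necessarily a tuple $\epair{w_1}{\epair{w_2}{\cdots}}$ of components $\hastype{w_i}{t_i}$, and $f'$ getting stuck on that tuple is the same (up to the bookkeeping beta/match steps of the uncurrying wrapper) as $\eapp{e}{w_1 \ldots w_n}$ getting stuck. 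The compatible/incompatible split is then made \emph{once}, on the product type, not per coordinate, so the question of whether per-coordinate choices compose never arises. Note also that the single-hole machinery already accommodates ``vectors'' internally: when evaluation hits the wrapper's pair case-analysis, \forcesym\ refines the one input hole into a pair of fresh sub-holes, and Lemmas~\ref{lem:resolve-compat} and~\ref{lem:k-stuck} were already proved in the presence of such sub-holes.

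As written, your route has a genuine gap exactly where you predict it: the claim that the earliest forced incompatibility still produces a stuck step ``regardless of how the remaining, compatible coordinates are filled in'' is asserted, not proved. It is not automatic, because re-instantiating the compatible coordinates with concrete values (rather than leaving them as holes) changes which cases of \forcesym\ fire along the prefix, so the replayed trace is not literally the original one; you would need a vectorized Lemma~\ref{lem:k-stuck} showing the first $i-1$ steps still go through and step $i$ still gets stuck under a simultaneous re-instantiation. That lemma is plausible but is real work --- essentially redoing the hardest proof in the section --- and the only ``glue'' the paper's reduction needs instead is the easy observation that the saturated application of $\genArgs{e}$ with $n$ fresh holes going stuck implies $f'$ applied to one hole goes stuck. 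If you want a complete proof, either discharge that vectorized lemma or, better, adopt the uncurrying reduction.
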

\begin{proof}
  For any function $f$ of multiple arguments, we can define $f'$ as the
  uncurried version of $f$ that takes all of its arguments as a single
  nested pair, and then apply Theorem~\ref{thm:soundness} to $f'$.
\end{proof}

\begin{figure}[t]
\centering
$$
\begin{array}{lclr}
\genWitnessN       & : & \mathsf{Nat} \times e \rightarrow e^{*} & \\
\genWitness{n}{e}  & = & \{ \resolve{e_{sat}}{\vsu} \mid \vsu \in \Sigma \} & \\
\quad \mbox{\textbf{where}} &    & & \\
\quad \quad e_{sat} & = & \genArgs{e} & (1) \\
\quad \quad res    & = & \takefn{n}{\evalfn{e_{sat}}} & (2) \\
\quad \quad \Sigma & = & \{ \vsu\ \mid \triple{\stuck}{\vsu}{\tsu} \in res\} & (3)
\end{array}
$$
\caption{Generating witnesses.}
\label{fig:algo-gen-witness}
\end{figure}

\section{Explaining Type Errors With Traces}
\label{sec:interactive}

A trace, on its own, is too detailed to be
a good explanation of the type error. One approach
is to use the witness input to step through the
program with a \emph{debugger} to observe how
the program evolves.
This route is problematic for two reasons.
First, existing debuggers and interpreters for
typed languages (\eg\ \ocaml) typically require
a type-correct program as input.
Second, we wish to have a quicker way to get
to the essence of the error, \eg\ by skipping
over irrelevant sub-computations, and focusing
on the important ones.

In this section we present an interactive visualization of program
executions.
First, we extend our semantics (\S~\ref{sec:inter-semant}) to record
each reduction step in a \emph{trace}, producing a \emph{reduction
  graph} alongside the witness.
Then we describe a set of common \emph{interactive debugging} steps that
can be expressed as simple traversals over the reduction graph
(\S~\ref{sec:traversing-graph}), yielding an interactive debugger that
allows the user to visualize \emph{how} the program goes
(wrong).

\subsection{Tracing Semantics}
\label{sec:inter-semant}

\paragraph{Reduction Graphs}
A \emph{steps-to} edge is a pair of expressions \singlestep{e_1}{e_2}, which
indicates that $e_1$ reduces, in a single step, to $e_2$.
%
%
A \emph{reduction graph} is a set of steps-to edges:
$$\tr ::= \bullet \spmid \singlestep{e}{e}; \tr$$  

\paragraph{Tracing Semantics}
We extend the transition relation (\S~\ref{sec:semantics}) to
collect the set of edges corresponding to the reduction graph.
Concretely, we extend the operational semantics to
a relation of the form $\stepg{e}{\vsu}{\tsu}{\tr}{e'}{\vsu'}{\tsu'}{\tr'}$
where $\tr'$ collects the transitions.

\paragraph{Collecting Edges}
%
%
The general recipe for collecting steps-to edges is
to record the consequent of each original rule in the
trace. That is, each original judgment \step{e}{\vsu}{\tsu}{e'}{\vsu'}{\tsu'}
becomes \stepg{e}{\vsu}{\tsu}{\tr}{e'}{\vsu'}{\tsu'}{\singlestep{e}{e'}; \tr}.
%


\subsection{Interactive Debugging}
\label{sec:traversing-graph}

Next, we show how to build a visual interactive debugger
from the traced semantics, by describing the visualization
\emph{state} --- \ie\ what the user sees at any given moment ---
and the set of \emph{commands} available to user and what
they do.

\paragraph{Visualization State}
A \emph{visualization state} 
is a \emph{directed graph}
whose vertices are expressions and whose edges are such
that each vertex has at most one predecessor and at most one
successor. In other words, the visualization state looks
like a set of linear lists of expressions as shown in
Figure~\ref{fig:nanomaly-factorial}.
The \emph{initial state} is the graph containing a single
edge linking the initial and final expressions.

\begin{figure}[t]
\centering
\includegraphics[height=4in]{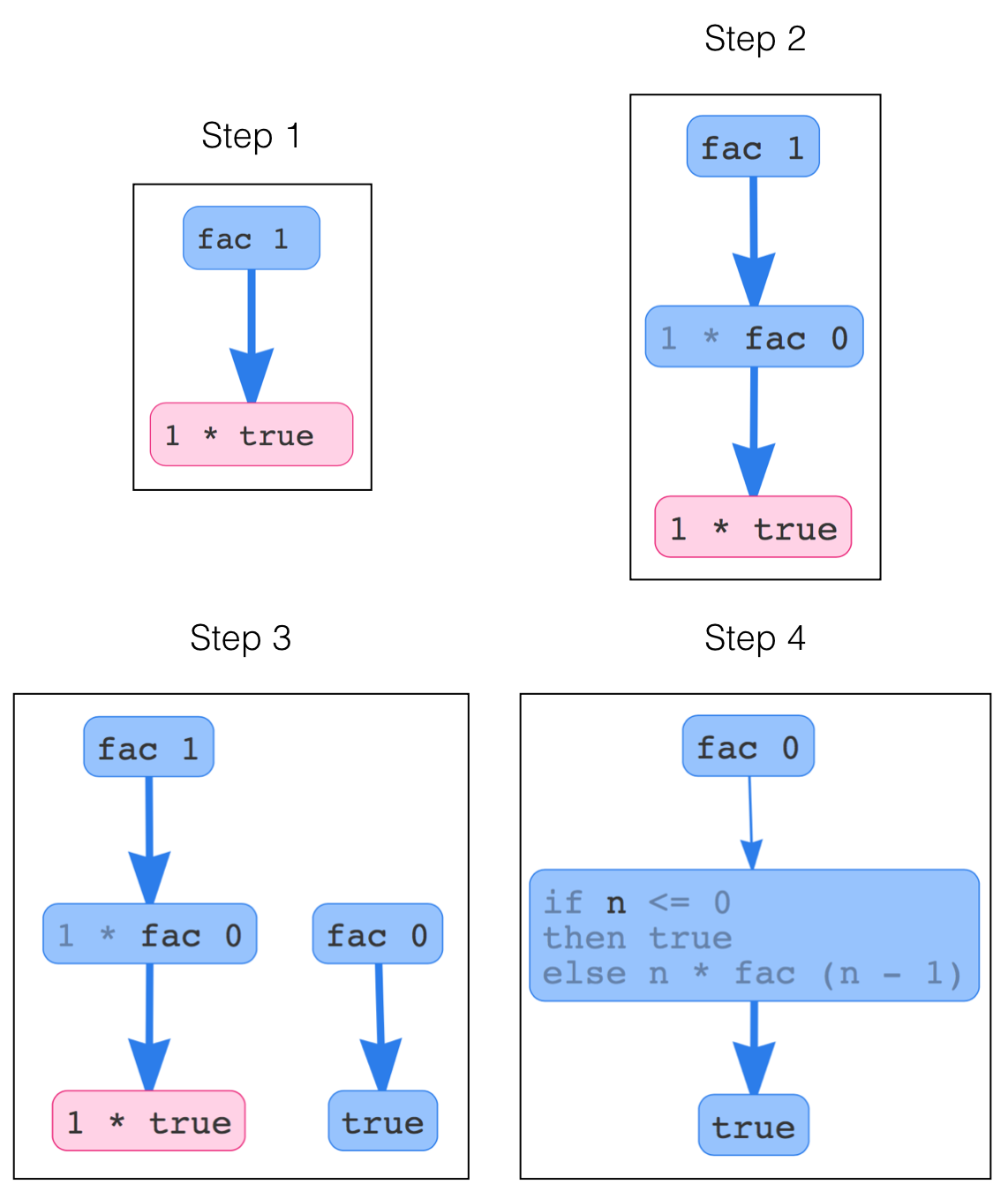}
\caption{A sequence of interactions with the trace of
  \texttt{fac 1}. The stuck term is red, in each node the redex is
  highlighted. Thick arrows denote a multi-step transition, thin arrows
  denote a single-step transition. We start in step 1. In step 2 we jump
  forward from the witness to the next function call. In step 3 we step
  into the recursive \texttt{fac 0} call, which spawns a new ``thread''
  of execution. In step 4 we take a single step forward from
  \texttt{fac 0}.}
\label{fig:nanomaly-factorial}
\end{figure}


\paragraph{Commands}
Our debugger supports the following \emph{commands}, each of which
is parameterized by a single expression (vertex) selected from the
(current) visualization state:
\begin{itemize}
\item \stepforwardsym, \stepbackwardsym:
      show the result of a single step forward or backward;
\item \jumpforwardsym, \jumpbackwardsym:
      show the result of taking multiple steps (a \emph{``big''} step)
      up to the first function call, or return, forward or backward
      respectively;
\item \stepintosym:
      show the result of stepping into a function call in a sub-term,
      isolating it in a new reduction thread; and
\item \stepoversym:
      show the result of skipping over a function call in a sub-term.
\end{itemize}

\paragraph{Jump Compression}
A \emph{jump compressed} trace is one whose edges are limited to forward
or backward jumps.
%
In our experience, jump compression abstracts many details of the
computation that are often uninteresting or irrelevant to the
explanation.
In particular, jump compressed traces hide low-level operations and
summarize function calls as call-return pairs, see
Figure~\ref{fig:fac-jump} for a variant of @fac@ that implements the
subtraction as a function call instead of a primitive.
\begin{figure}[t]
\centering
\includegraphics[height=2in]{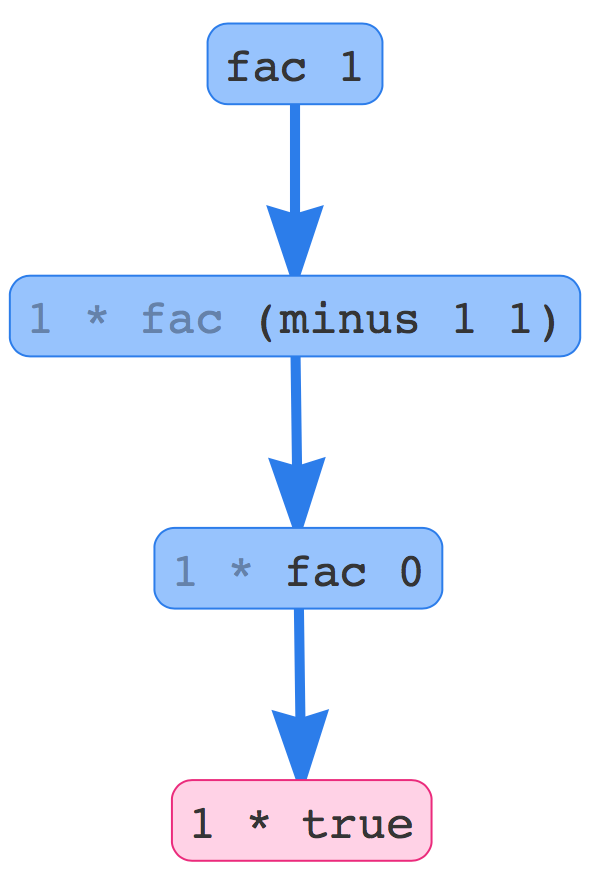}
\caption{Jump-compressed trace of \texttt{fac 1} with subtraction
  implemented as a function call.}
\label{fig:fac-jump}
\end{figure}
Once users have identified interesting call-return pairs, they can
step into those calls and proceed with more fine-grained steps.
%
%
Note that jump compressed traces are not quite the same as
stack traces as they show \emph{all} function calls, including
those that returned successfully.
\section{Evaluation}
\label{sec:evaluation}

We have implemented a prototype of our search procedure and trace
visualization for a purely functional subset of \ocaml\ --- with
polymorphic types and records, but no modules, objects, or polymorphic
variants --- in a tool called \nanomaly.
\footnote{\url{https://github.com/ucsd-progsys/nanomaly}}
We treat explicit type signatures, \eg @(x : int)@, as
primitive operations that narrow the type of the wrapped value.
In our implementation we instantiated \gensym\ with a simple random
generation of values, which we will show suffices for the majority of
type errors.

\paragraph{Evaluation Goals}
There are four questions we seek to answer with our evaluation:
\begin{enumerate}
\item \emphbf{Witness Coverage} (\S~\ref{sec:eval:witness-coverage},~\ref{sec:how-safe})
      How many ill-typed programs \emph{admit} witnesses?
\item \emphbf{Witness Complexity} (\S~\ref{sec:trace-complexity})
      How \emph{complex} are the traces produced by the witnesses?
\item \emphbf{Witness Utility} (\S~\ref{sec:advantage-traces},~\ref{sec:user-study})
      How \emph{helpful} 
      are the witnesses in debugging type errors?
\item \emphbf{Witness-based Blame} (\S~\ref{sec:locating})
      Can witnesses be used to \emph{locate} the source
      of an error?
\end{enumerate}

In the sequel we present our experimental methodology (\S~\ref{sec:methodology})
and then answer the above questions.
However, for the impatient reader, we first summarize our main results:

\paragraph{1. Most Type Errors Admit Witnesses}
Our prime result is that the vast majority of static type errors, around
85\%, do in fact admit a dynamic witness.
Further, \toolname efficiently synthesizes witnesses with its randomized search;
it can synthesize a witness for over 75\% of programs in under one second, \ie
fast enough for interactive use. 

\paragraph{2. Jump-Compressed Traces Are Small}
We find that our jump-compression heuristic effectively abstracts the
pedestrian details of computation, compressing the median trace with
14--15 single-step reductions to only 4 jumps.
Over 80\% of programs have a jump-compressed trace with at most 10
jumps, providing a bird's-eye view from which we can launch a more
in-depth investigation.

\paragraph{3. Witnesses Help Novices}
A witness should also help programmers \emph{understand} and
\emph{fix} type errors.
We use a set of ill-typed student programs to show that \toolname's
witnesses effectively demonstrate the runtime error that the type
system prevented.
Furthermore, we find, in a study of undergraduate students, that
\toolname's witnesses lead to more accurate diagnoses and fixes of type
errors than \ocaml's type error messages.

\paragraph{4. Witnesses Assign Blame}
Finally, we present a simple heuristic that allows us to use witnesses
to \emph{automatically} assign blame for type errors.
We treat the values inside the stuck term as \emph{sources} of typing
constraints and the stuck term itself as a \emph{sink}, producing
a slice of the program that likely contains the error.
Using this heuristic, \toolname's witnesses are competitive with the
state-of-the-art localization tools \mycroft~\cite{Loncaric2016-uk}
and \sherrloc~\cite{Zhang2014-lv}.

\subsection{Methodology}
\label{sec:methodology}
We answer the first two questions on two sets of ill-typed programs,
\ie\ programs that were rejected by the \ocaml\ compiler because of a
type error.
The first dataset comes from the Spring 2014 undergraduate Programming
Languages (CSE 130) course at UC San Diego.
We recorded each interaction with the \ocaml\ top-level system over the
course of the first three assignments (IRB
\#140608),
from which we extracted \ucsdsize\ distinct, ill-typed \ocaml\ programs
from a cohort of 46 students.
The second dataset --- widely used in the literature --- comes from a
graduate-level course at the University of Washington~\cite{Lerner2006-pj},
from which we extracted 284 ill-typed programs.
Both datasets contain relatively small programs, the largest being 348
SLoC; however, they demonstrate a variety of functional programming
idioms including (tail) recursive functions, higher-order functions,
and polymorphic and algebraic data types. 

We answer the third question in two steps.
First, we present a qualitative evaluation of \toolname's traces on a
selection of programs drawn from the UCSD dataset.
Second, we present a quantitative user study of students in the
University of Virginia's Spring 2016 undergraduate Programming Languages
(CS 4501) course.
As part of an exam, we presented the students with ill-typed \ocaml\
programs and asked them to
(1) \emph{explain} the type error, and
(2) \emph{fix} the type error (IRB \#2014009900).
For each problem the students were given the ill-typed program and
either \ocaml's error message or \toolname's jump-compressed trace.

We answer the last question on a subset of the \ucsdbench dataset.
%
For each ill-typed program compiled by a student, we identify the student's
\emph{fix} by searching for the first type-correct program that the student
subsequently compiled.
We then use an expression-level \emph{diff}~\cite{Lempsink2009-xf} to
determine which sub-expressions changed between the ill-typed program
and the student's fix, and treat those expressions as the source of the
type error.

\subsection{Witness Coverage}
\label{sec:eval:witness-coverage}
We ran our search algorithm on each program for 1,000 iterations, with
the entry point set to the function that \ocaml\ had identified as
containing a type error.
Due to the possibility of non-termination we set a timeout of one minute
total per program.
%
%
We also added a na{\"\i}ve check for infinite recursion; at each recursive
function call we check whether the new arguments are identical to the
current arguments.
If so, the function cannot possibly terminate and we report an error.
While not a \emph{type error}, infinite recursion is still a clear bug
in the program, and thus valuable feedback for the user.

\begin{figure}[t]
\centering
\includegraphics[width=0.7\linewidth]{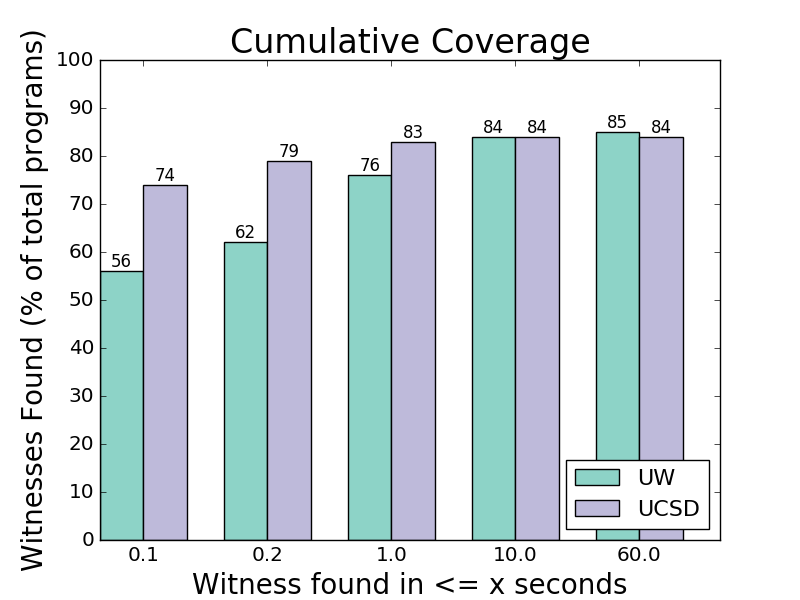}
\caption{Results of our coverage testing. Our random search successfully
  finds witnesses for 76--83\% of the programs in under one second,
  improving to 84--85\% in under 10 seconds. }
\label{fig:results-witness}
\end{figure}
\begin{figure}[t]
\includegraphics[width=0.7\linewidth]{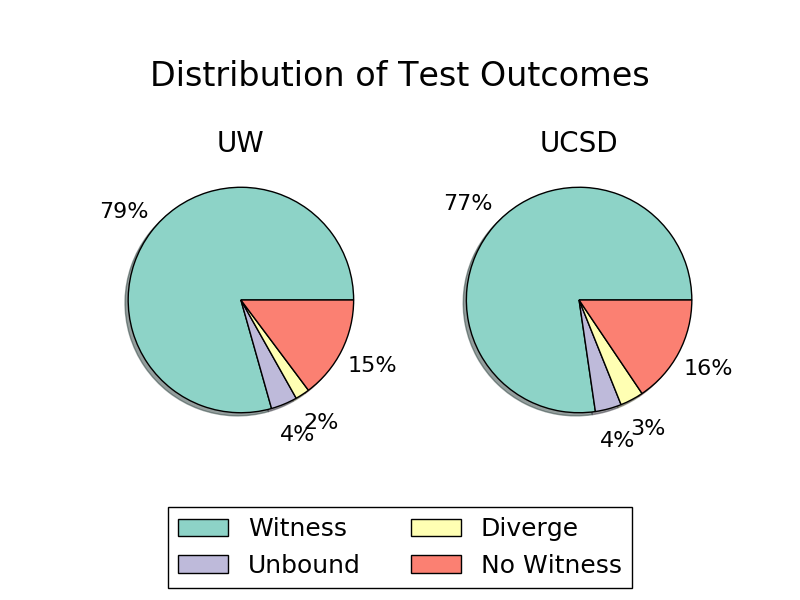}
\caption{Distribution of test outcomes. In both datasets we detect
  actual type errors at least 77\% of the time, unbound variables or
  constructors 4\% of the time, and diverging loops 2--3\% of the
  time. For the remaining 15--16\% of the programs we are unable to
  provide any useful feedback. }
\label{fig:results-distrib}
\end{figure}

\paragraph{Results}
\label{sec:results-witness}
The results of our experiments are summarized in
Figures~\ref{fig:results-witness}~and~\ref{fig:results-distrib}.
In both datasets our tool was able to find a witness for over 75\% of the
programs in under one second, \ie\ fast enough to be integrated as a
compile-time check. If we extend our tolerance to a 10 second timeout,
we reach 84\% coverage, and if we allow a 60 second search,
we hit a maximum of 84--85\% coverage.
Interestingly, while the vast majority of witnesses corresponded to a
type-error, as expected, 4\% triggered an unbound variable error (even
though \ocaml\ reported a type error) and 3\% triggered an infinite
recursion error.
For the remaining 15--16\% of programs we were unable to provide any useful
feedback as they either completed 1,000 tests successfully, or timed out
after one minute.
%
%
While a more advanced search procedure, \eg\ dynamic-symbolic execution,
could likely uncover more errors, our experiments suggest that
type errors are coarse enough (or that novice programs are \emph{simple}
enough) that these techniques are not necessary.

\subsection{How safe are the ``safe'' programs?}
\label{sec:how-safe}

An immediate question arises regarding the 15--16\% of programs for
which we could not synthesize a witness:
are they \emph{actually} safe (\ie is the type system being too conservative),
or did \toolname simply fail to find a witness?
%

To answer this question, we investigated the 732 \ucsdbench programs for
which we failed to find a witness.
We used a combination of automatic and manual coding to categorize these
programs into four classes.
The first class is easily detected by \toolname itself, and thus admits
a precise count.
This left us with 504 programs that required manual coding; we selected
a random sample of 50 programs to investigate, and will report results
based on that sample.
Figure~\ref{fig:no-witness} summarizes the results of our investigation ---
we note the classes that were based on the random sample with a ``*''.
Note that the percentages referenced in Figure~\ref{fig:no-witness} (and
in the sequel) are with respect to the total number of programs in the
\ucsdbench dataset, not only those were \toolname failed to find a
witness.

\begin{figure}[t]
\includegraphics[width=0.7\linewidth]{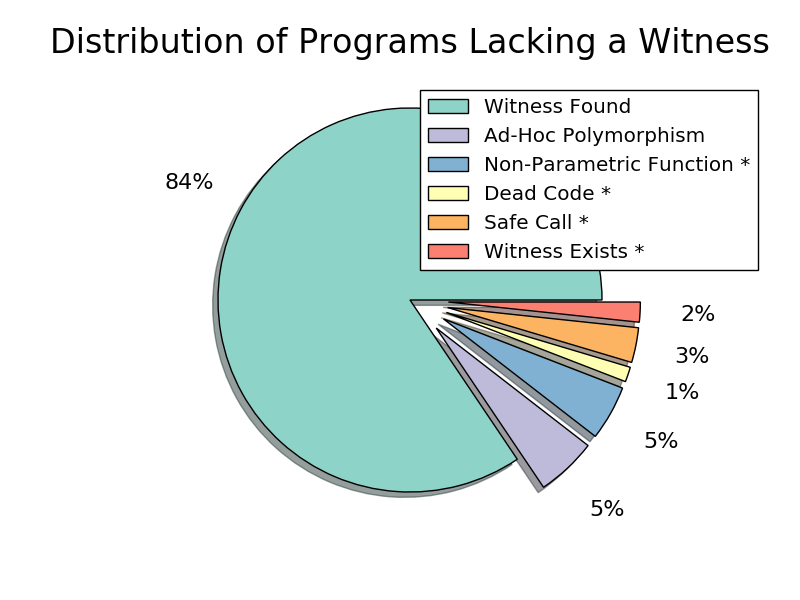}
\vspace{-0.75cm}
\caption{Results of our investigation into programs where \toolname
  did not produce a witness. A ``*'' denotes that the percentage is an
  estimate based on a random sampling of 50 programs.}
\label{fig:no-witness}
\end{figure}

\paragraph{Ad-hoc Polymorphism}
We found that for 5\% of all programs \toolname got stuck when it tried to
compare two holes.
\ocaml provides polymorphic equality and comparison operators,
overloading them for each type.
While convenient to use, they pose a challenge for \toolname's
combination of execution and inference.
For example, consider the \hbox{@n <= 0@} test in our @fac@ example.
The @<=@ operator is polymorphic, but in this case we can make progress because
the literal @0@ is not.
Suppose, however, we parameterized @fac@ by a lower bound, \eg
\begin{code}
  let rec fac n m =
    if n <= m then
      true
    else
      n * fac (n - 1) m
\end{code}
When given @fac@, \toolname will generate two fresh holes
$\nu_1[\alpha_1]$ and $\nu_2[\alpha_2]$ and proceed directly into the
@n < m@ comparison.
We cannot (yet) instantiate either hole because we have no constraints
on the $\alpha$s (we know they must be equal, but nothing else), and
furthermore we do not know what constraints we may encounter later on in
the program.
Thus, we cannot perform the comparison and proceed, and must give up our
search for a witness, even though one obviously exists, any pair of |n|
and |m| such that |n <= m| is false.

Extending \toolname with support for symbolic execution would alleviate
this issue, as we could then begin symbolically executing the program
until we learn how to instantiate |n| and |m|.
Alternatively, we could \emph{speculatively} instantiate both |n| and
|m| with some arbitrary type, and proceed with execution until we
discover a type error.
This speculative instantiation is, of course, unsound; we would have to
take care to avoid reporting frivolous type errors that were caused by
such instantiations.
We would need to track which holes were instantiated speculatively 
to distinguish type errors that would have happened regardless, as
in @fac@, from type errors that were caused by our instantiation.

\newpage
Further, suppose that our speculative instantiation induces a frivolous
type error.
For example, suppose we are given
\begin{code}
  let bad x y =
    if x < y then
      x *. y
    else
      0.0
\end{code}
and choose to (speculatively) instantiate @x@ and @y@ as @int@s and proceed
down the ``true'' branch.
We will quickly discover this was the wrong choice, as they are immediately
narrowed to @float@s.
We must now backtrack and try a different instantiation, but we no
longer need to choose one at random.
Since our instantiation was speculative, and @x@ and @y@
were \emph{originally} holes, we can treat the @*.@ operator as a normal
narrowing point with two holes.
This tells us that the \emph{correct} instantiation was in fact @float@,
and we can then proceed as normal from the backtracking point with a
concrete choice of @float@s.
Thus, it appears that speculative instantiation of holes may be a
useful, lightweight alternative to symbolic execution for our purposes.


\paragraph{Non-Parametric Function Type *}
5\% of all programs lack a witness in our semantics due to our
non-parametric $\tfun$ type for functions.
Recall that our goal is to expose the runtime errors that would have
been prevented by the type systems.
At runtime, it is always safe to call a function, thus we give functions
a simple type $\tfun$ that says they may be applied, but says nothing
about their inputs or outputs.
But consider the following @clone@ function, which is supposed to
produce a list containing @n@ copies of the input @x@.
%
%
\begin{code}
  let rec clone x n =
    if n > 0 then
      clone [x] (n - 1)
    else
      []
\end{code}
Unfortunately, the student instead constructs an @n@-level nested list
containing a single @x@.
The \ocaml compiler rejects this program because the recursive call to
@clone@ induces a cyclic typing constraint @'a = 'a list@, capturing the
fact that each call increases the nesting of the list.
\toolname fails to catch this because we do not track the types of the
inputs to @clone@.

We note, however, that @clone@ cannot go wrong; it is perfectly safe to
repeatedly enclose a list inside another (disregarding the fact that the
nested list is never returned).
Still, such a function would be very difficult to \emph{call} safely, as
the programmer would have to reason about the dependency between the
input @n@ and the nesting of the output list, which cannot be expressed
in \ocaml's type system.

Thus, it is not particularly satisfying that \toolname fails to produce
a witness here; one solution could be to track the types of the
inputs, and demonstrate to the user how they change between recursive
calls.
This would require maintaining a typing environment of variables in
addition to the environments we maintain for holes.
We would have to modify the rule $\reappgood$ from
Figure~\ref{fig:operational} to additionally $\forcesym$ the function's
type against the concrete inputs.
However, we would want to ensure that this $\forcesym$ cannot fail ---
it is preferable to report a stuck term as that provides a fuller view
of the error.
Rather, we would note which evaluation steps induced incompatible
type refinements, and if a traditional witness cannot be found, we could
then report a trace expanded to show precisely these steps.
This represents only a modest extension to our semantics and would be
interesting to explore further.

The $\tfun$ abstraction is also problematic when we have to generate new
functions. Consider the following @pipe@ function that composes a list
of functions.
\begin{code}
  let pipe fs =
    let f a x = a x in
    let base a = a in
    List.fold_left f base fs
\end{code}
In the folding function @f@, the student has applied the accumulator @a@
to the new function @x@, rather than composing the two.
The \ocaml compiler again detects a cyclic typing constraint and rejects
the program, but \toolname is unable to produce a witness.
In this case the issue lies in the fact that the safety of a call to
@pipe@ is determined by its arguments: the call @pipe [(fun x -> 1)]@ is
safe, but the call @pipe [(fun x -> 1); (fun x -> 2); (fun x -> 3)]@
will get stuck when we try to reduce @1 (fun x -> 3)@.
Unfortunately, \toolname is unable to synthesize such a witness because
of the abstraction to $\tfun$.
Specifically, this abstraction forces our hand inside $\gensym$; we do
not know the what the input and output types of the function should be,
so the only safe thing to do is to generate a function that accepts any
input and returns a value of a yet-to-be-determined type.
Thus, our lenient instantiation of holes prevents us from
discovering a witness here.

\paragraph{Dead Code and ``Safe'' Function Calls *}
4\% of all programs contained type errors that were unreachable, either
because they were dead code, or because the student called the function
with inputs that could not trigger the error.

1\% contained type errors that were unreachable by any inputs, often due
to overlapping patterns in a @match@ expression.
%
%
While technically safe, dead code is generally considered a maintenance
risk, as the programmer may not realize that it is dead~\cite{Wheeler2014-fg}
or may accidentally bring it back to life~\cite{Seven2014-gf}.
Thus, a warning like that provided by \ocaml's pattern exhaustiveness
checker would be helpful.

A further 3\% included a function call where the
student supplied ill-typed inputs, but the path induced
by the call did not contain an error.
Consider the following |assoc| function, which
looks up a key in an \emph{association list}, returning a default if
it cannot be found.
\begin{code}
  let rec assoc (d, k, l) = match l with
    | (ki, vi)::tl ->
       if ki = k then
         vi
       else
         assoc (d, k, tl)
    | _ -> d

  let _ = assoc ([], 123, [(123, "sad"); (321, "happy")])
\end{code}
The student's definition of |assoc| is correct, but \ocaml rejects their
subsequent call because the default value |[]| is incompatible with the
|string| values in the list.
In this particular call the key |123| is in the list, so the default
will not be used (even if it were, there would not be an error) and
\ocaml's complaint is moot.
Of course, \ocaml cannot be expected to know that this particular call
is safe, its type system is not sophisticated enough to express the
necessary conditions.
%


\paragraph{Witness Exists *}
We found that only 2\% of all programs admit a witness that \toolname
was unable to discover.
Slightly over half involved synthesizing a \emph{pair} of
specially-crafted inputs that would result in the function returning
values of incompatible types.
The rest required synthesizing an input that would trigger a
particular path through the program, and would likely have been caught
by symbolic execution.

\paragraph{Summary}
Our investigation suggests that the majority of programs
for which we fail to find a witness do not, in fact, admit a
witness under \toolname's semantics.
%
These programs were generally cases where \ocaml's type system was
overly conservative.
Of course, the conservatism is somewhat justified as each case pointed
to code that would be difficult to use or maintain;
it would be interesting to investigate how demonstrate these issues in an
intuitive manner.



\subsection{Witness Complexity}
\label{sec:trace-complexity}

For each of the ill-typed programs for which we could
find a witness, we measure the complexity of the generated
trace using two metrics.

%
\begin{enumerate}
\item \emphbf{Single-step:} The size of the trace after expanding
  all of the single-step edges from the witness to the stuck term, and
\item \emphbf{Jump-compressed:} The size of the jump-compressed trace.
\end{enumerate}

%
\begin{figure}[t]
\includegraphics[width=0.7\linewidth]{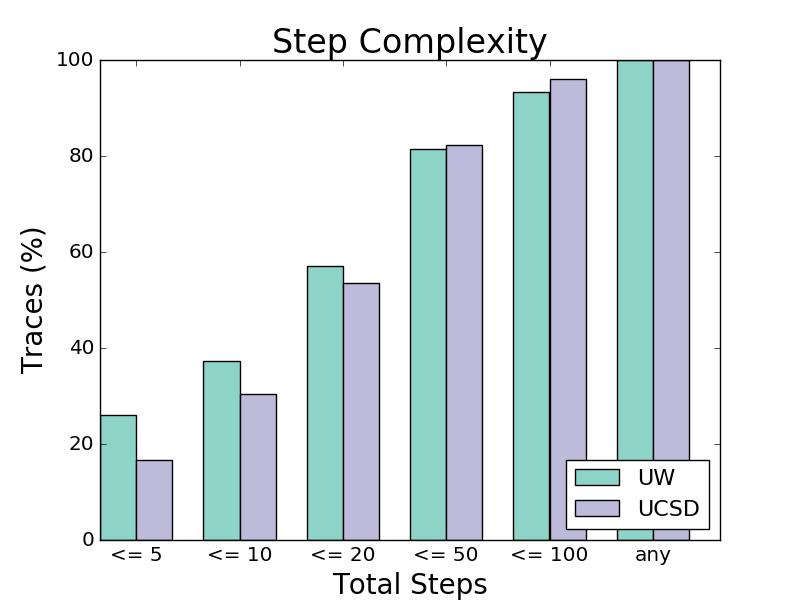}
\includegraphics[width=0.7\linewidth]{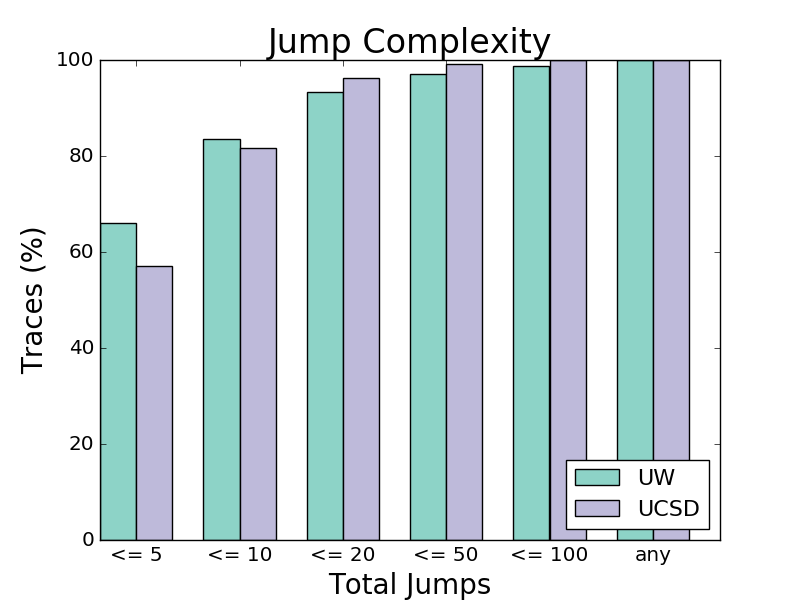}
\caption{Complexity of the generated traces. Over 80\% of the combined traces
  have a jump complexity of at most 10, with an average complexity of 7
  and a median of 5.}
\label{fig:results-complexity}
\end{figure}
\paragraph{Results}
\label{sec:results-complexity}
The results of the experiment are summarized in
Figure~\ref{fig:results-complexity}.
The average number of single-step reductions per trace is 17 for the
\ucsdbench\ dataset (42 for the \uwbench\ dataset) with a maximum of
2,745 (\resp 982) and a median of 15 (\resp 15).
The average number of jumps per trace is 7 (\resp 9) with a
maximium of 353 (\resp 221) and a median of 4 (\resp 4).
In both datasets about 60\% of traces have at most 5 jumps, and 80\% or more
have at most 10 jumps.

\subsection{Qualitative Evaluation of Witness Utility}\label{sec:advantage-traces}

Next, we present a \emph{qualitative} evaluation that compares
the explanations provided by \toolname's dynamic witnesses with
the static reports produced by the \ocaml\ compiler and \sherrloc,
a state-of-the-art fault localization approach~\cite{Zhang2014-lv}.
In particular, we illustrate, using a series of examples drawn
from student programs in the \ucsdbench\ dataset, how \toolname's
jump-compressed traces can get to the heart of the error. Our approach
highlights the conflicting values that cause the program to get
stuck, rather that blaming a single one,
shows the steps necessary to reach the stuck state, and
does not assume that a function is correct just because it type-checks.
For each example we will present:
(1)~the code;
(2)~the error message returned \ocaml;
(3)~the error locations returned by \hlOcaml{\ocaml} and \hlSherrloc{\sherrloc};
and (4)~\toolname's jump-compressed trace.



\paragraph{Example: Recursion with Bad Operator}
The recursive function @sqsum@ should square each
element of the input list and then compute the sum
of the result.
\begin{ecode}
  let rec sqsum xs = match xs with
    | [] -> 0
    | h::t -> (*@\hlOcaml{\hlSherrloc{sqsum t}}@*) @ (h * h)
\end{ecode}
Unfortunately the student has used the list-append
operator |@| instead of \texttt{+}. 
Both \ocaml\ and \sherrloc\ blame the \emph{wrong location},
the recursive call @sqsum t@, with the message
\begin{verbatim}
  This expression has type
    int
  but an expression was expected of type
    'a list
\end{verbatim}
\toolname\ produces a trace showing how the evaluation of
@sqsum [1]@ gets stuck.
\begin{center}
  \includegraphics[height=125px]{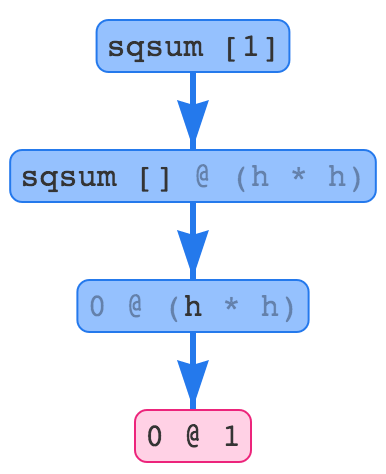}
\end{center}
The trace highlights the entire stuck term
(not just the recursive call), emphasizing
the \emph{conflict} between @int@ and @list@
rather than assuming one or the other is correct.

\paragraph{Example: Recursion with Bad Base Case}
The function @sumList@ should add up
the elements of its input list.
\begin{ecode}
  let rec sumList xs = match xs with
    | []    -> (*@\hlSherrloc{[]}@*)
    | y::ys -> y + (*@\hlOcaml{sumList ys}@*)
\end{ecode}
Unfortunately, in the base case, it returns @[]@
instead of @0@.
\sherrloc\ blames the base case, and \ocaml\
assumes the base case is correct and blames
the \emph{recursive call} on line 3:
\begin{verbatim}
  This expression has type
    'a list
  but an expression was expected of type
    int
\end{verbatim}
Both of the above are parts of the full story, which
is summarized by \toolname's trace showing
how @sumList [1; 2]@ gets stuck at @2 + []@.
\begin{center}
  \includegraphics[height=125px]{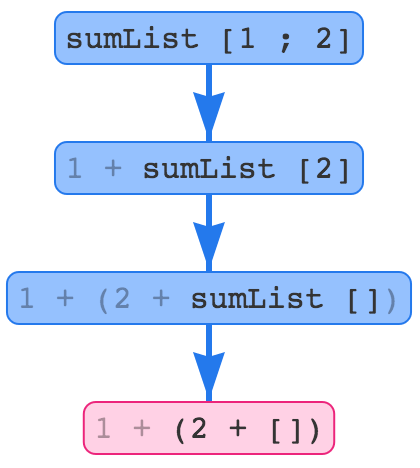}
\end{center}
The trace clarifies (via the third step)
that the @[]@ results from the recursive call
\hbox{@sumList []@,} and shows how it is incompatible with
the subsequent \texttt{+} operation.


\paragraph{Example: Bad Helper Function that Type-Checks}
The function @digitsOfInt@ should return a list of
the digits of the input integer.
\begin{ecode}
  let append x xs =
    match xs with
    | [] -> [x]
    | _  -> x :: xs

  let rec digitsOfInt n =
    if n <= 0 then
      []
    else
      append ((*@\hlSherrloc{digitsOfInt (n / 10)}@*)) [(*@\hlOcaml{n mod 10}@*)]
\end{ecode}
%
Unfortunately, the student's @append@ function \emph{conses} an element
onto a list instead of appending two lists.
Though incorrect, @append@ still type-checks and thus \ocaml and
\sherrloc blame the \emph{use-site} on line 10.
\begin{verbatim}
  This expression has type
    int
  but an expression was expected of type
    'a list
\end{verbatim}
In contrast, \toolname makes no assumptions about @append@,
yielding a trace that illustrates the error on line 4, by
highlighting the conflict in consing a list onto a list of integers.
\begin{center}
  \includegraphics[height=160px]{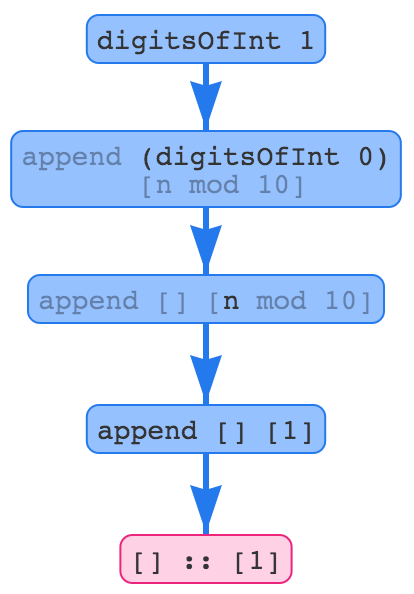}
\end{center}

\paragraph{Example: Higher-Order Functions}
The higher-order function @wwhile@ is supposed
to emulate a traditional while-loop. It takes
a function @f@ and repeatedly calls @f@ on the
first element of its output pair, starting with
the initial @b@, till the second element is @false@.
\newpage
\begin{ecode}
  let rec wwhile (f,b) =
    match f with
    | (z, false) -> z
    | (z, true)  -> wwhile (f, z)

  let f x =
    let xx = x * x in
    (xx, (xx < 100))

  let _ = wwhile ((*@\hlOcaml{\hlSherrloc{f}}@*), 2)
\end{ecode}
The student has forgotten to \emph{apply} @f@ at all on line 2,
and just matches it directly against a pair.
This faulty @wwhile@ definition nevertheless typechecks,
and is assumed to be correct by both \ocaml\ and \sherrloc\
which blame the use-site on line 10.
\begin{verbatim}
  This expression has type
    int -> int * bool
  but an expression was expected of type
    'a * bool
\end{verbatim}
\toolname\ synthesizes a trace that draws the eye to the
true error: the @match@ expression on line 2, and highlights
the conflict in matching a function against a pair pattern.
\begin{center}
  \includegraphics[height=135px]{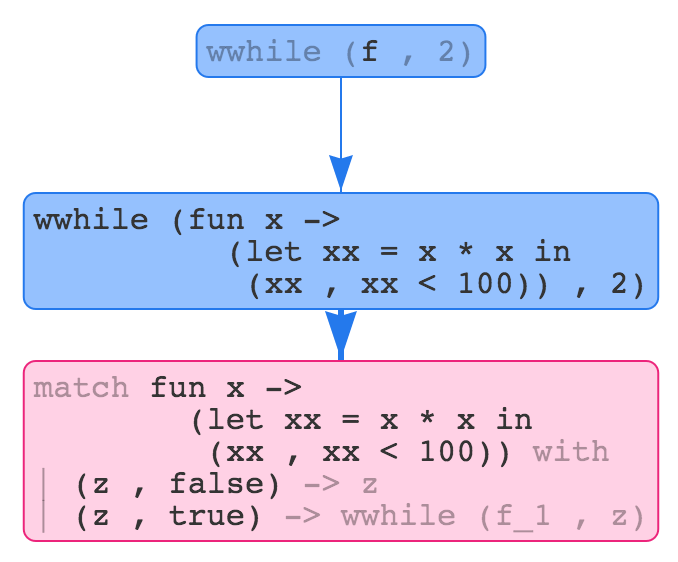}
\end{center}
By highlighting conflicting values, \ie\ the source and sink
of the problem, and not making assumptions about function correctness, \toolname\
focusses the user's attention on the piece of code that is
actually relevant to the error.

\subsection{Quantitative Evaluation of Witness Utility}
\label{sec:user-study}
%
We assigned four problems to the 60 students in the course: the
@sumList@, \hbox{@digitsOfInt@,} and @wwhile@ programs from
\S~\ref{sec:advantage-traces}, as well as the following @append@ program
\begin{ecode}
  let append x l =
    match x with
    | []   -> l
    | h::t -> h :: t :: l
\end{ecode}
which triggers an occurs-check error on line 4.
For each problem the students were given the ill-typed program and
either \ocaml's error or \toolname's jump-compressed trace;
the full user study is available in Appendix~\ref{sec:user-study-exams}.
Due to the nature of an in-class exam, not every student answered every
question; we received between 13 and 28 (out of a possible 30) responses
for each problem-tool pair.

We then instructed four annotators (one of whom is an author, the other
three are teaching assistants at UCSD) to classify the answers as
correct or incorrect.
We performed an inter-rater reliability (IRR) analysis to determine the
degree to which the annotators consistently graded the exams.
\footnote{Measuring IRR is an established practice to account for
  potential bias among raters. The students were asked to explain the
  errors in English; judging whether they truly understood the errors
  involved a surprising amount of subjectivity, and is thus subject to
  rater bias.}
As we had more than two annotators assigning nominal (``correct'' or
``incorrect'') ratings we used Fleiss' kappa~\cite{Fleiss1971-du} to
measure IRR.\@
Fleiss' kappa is measured on a scale from $1$, indicating total
agreement, to $-1$, indicating total disagreement, with $0$ indicating
random agreement.

Finally, we used a one-sided Mann-Whitney $U$ test~\cite{Mann1947-fd} to
determine the significance of our results.
The null hypothesis was that the responses from students given
\toolname's witnesses were drawn from the same distribution as those
given \ocaml's errors, \ie \toolname had no effect.
Since we used a one-sided test, the alternative to the null hypothesis
is that \toolname had a \emph{positive} effect on the responses.
We reject the null hypothesis in favor of the alternative if the test
produces a significance level $p < 0.05$, a standard threshold for
determining statistical significance.

\begin{figure}[t]
\includegraphics[width=0.7\linewidth]{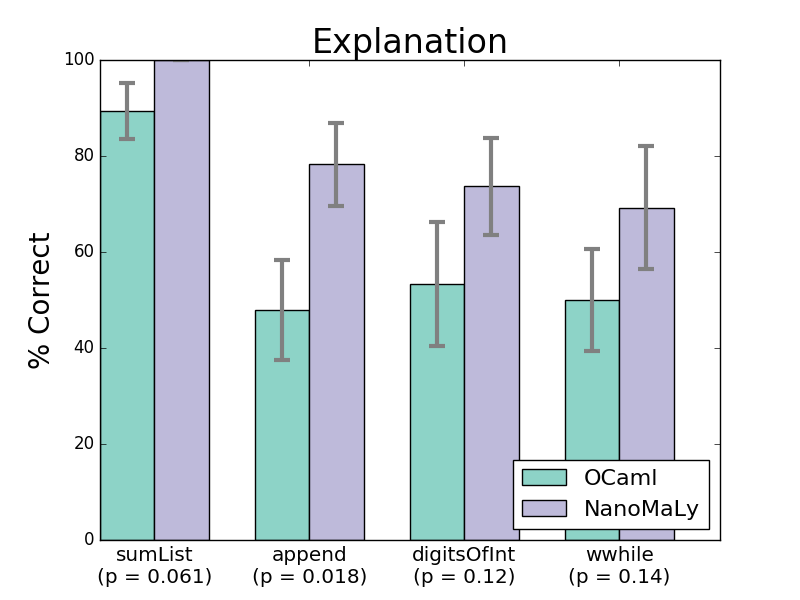}
\includegraphics[width=0.7\linewidth]{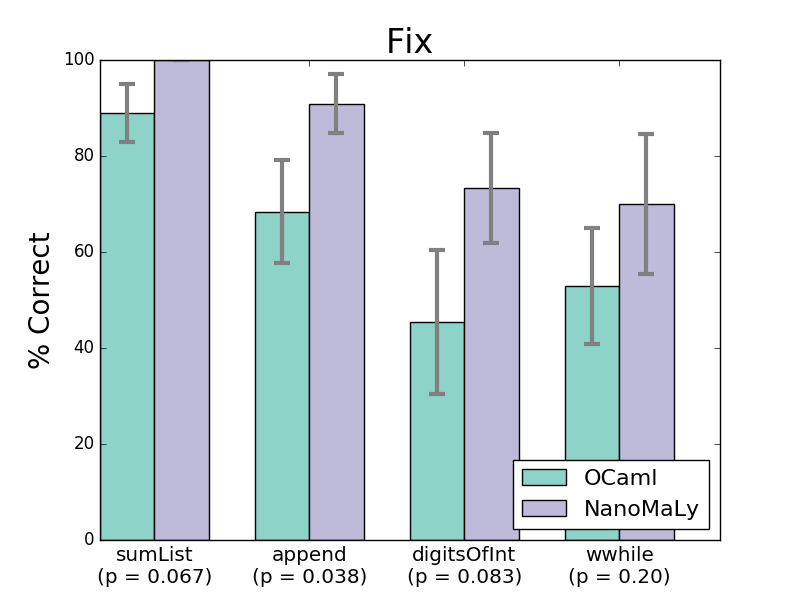}
\caption{A classification of students' explanations and fixes for type
  errors, given either \ocaml's error message or \toolname's
  jump-compressed trace. The students given \toolname's jump-compressed
  trace consistently scored better ($\ge 10\%$) than those given
  \ocaml's type error. We report the result of a one-sided Mann-Whitney
  $U$ test for statistical significance in parentheses.}
\label{fig:results-user-study}
\end{figure}

\paragraph{Results}
The measured kappa values were $\kappa = 0.72$ for the explanations and
$\kappa = 0.83$ for the fixes; while there is no formal notion for what
consititutes strong agreement~\cite{Krippendorff2012-wd}, kappa values
above $0.60$ are often called ``substantial''
agreement~\cite{Landis1977-ey}.
Figure~\ref{fig:results-user-study} summarizes a single annotator's
results, which show that students given \toolname's jump-compressed
trace were consistently more likely to correctly explain
and fix the type error than those given \ocaml's error message.
Across each problem the \toolname responses were marked correct
$10-30\%$ more often than the \ocaml responses, which suggests that
the students who had access to \toolname's traces had a better
understanding of the type errors;
however, only the @append@ tests were statistically significant at
$p < 0.05$.

\paragraph{Threats to Validity}
Measuring understanding is a difficult task; the following summarize
the threats to the validity of our results.

\subparagraph{Construct.}
We used the correctness of the student's explanation of, and fix for,
the type error as a proxy for her understanding, but it is possible that
other metrics would produce different results. One such metric that is
also surely relevant is time-to-completion, \ie a good error report
should \emph{quickly} guide the student to a fix. Unfortunately, the
in-class exam setting of our study did not admit the collection of
timing data.

Furthermore, one might object to our selection of \ocaml as
the baseline comparison rather than \sherrloc or \mycroft, which also
claim to produce more accurate error reports. This is indeed a
limitation of our study, but we note that \sherrloc blames the same
expression as \ocaml in both @wwhile@ and @append@, and in @digitsOfInt@
both blame the wrong function.

Finally, one might point out that our study investigates the use of
\toolname as a \emph{debugging} aid rather than as a \emph{teaching}
aid. That is, it may be that \toolname helps students solve their
immediate problem, but does not help them build a lasting understanding
of the type system. We have not attempted a longitudinal study of the
long-term impact of using \toolname, but we agree that it would be an
interesting future direction.

\subparagraph{Internal.}
We assigned students randomly to two groups. The first was given
\ocaml's errors for @append@ and \hbox{@digitsOfInt@,} and \toolname's trace
for @sumList@ and \hbox{@wwhile@;} the second was given the opposite
assignment of errors and traces. This assignment ensured that: (1) each
student was given \ocaml and \toolname problems; and (2) each student
was given an ``easy'' and ``hard'' problem for both \ocaml and
\toolname. Students without sufficient knowledge of \ocaml could affect
the results, as could the time-constrained nature of an exam. For these
reasons we excluded any answers left blank from our analysis.

\subparagraph{External.}
Our experiment used students in the process of learning \ocaml,
and thus may not generalize to all developers. The four
programs were chosen manually, via a random selection and
filtering of the programs in the \ucsdbench dataset. In some cases we made
minor simplifying edits (\eg alpha-renaming, dead-code removal) to the
programs to make them more understandable in the short timeframe of an
exam; however, we never altered the resulting type-error. A different
selection of programs may lead to different results.

\subparagraph{Conclusion.}
We collected exams from 60 students, though due to the nature of the
study not every student completed every problem.
The number of complete submissions ranges from 13 (for the \toolname
version of @wwhile@) to 28 (for the \ocaml version of @sumList@), out of
a maximum of 30 per program-tool pair.
Our results are statistically significant in only 2 out of 8 tests; however,
collecting more responses per test pair was not possible as it
would require having students answer the same problem twice (once with
\ocaml and once with \toolname).

\subsection{Locating Errors with Witnesses}
\label{sec:locating}

We have seen that \toolname can effectively synthesize witnesses to
explain the majority of (novice) type errors, but a good error report
should also help \emph{locate} the source of the error.
Thus, our final experiment seeks to use \toolname's witnesses as
localizations.

As discussed in \S~\ref{sec:methodology}, we recorded
each interaction of our students with the \ocaml top-level system.
This means that, in addition to collecting ill-typed programs, we
collected subsequent, fixed versions of the same programs.
%
For each ill-typed program compiled by a student, we identify the student's
\emph{fix} by searching for the first type-correct program that the student
subsequently compiled.
We then use an expression-level \emph{diff}~\cite{Lempsink2009-xf} to
determine which sub-expressions changed between the ill-typed program
and the student's fix, and treat those expressions as the source of the
type error.

Not all ill-typed programs will have an associated fix; furthermore,
at some point a ``fix'' becomes a ``rewrite''.
We do not wish to consider the ``rewrites'', so we discard outliers
where the fraction of expressions that have changed is more than one
standard deviation above the mean, establishing a diff threshold of
40\%.
This accounts for roughly 14\% of programs pairs we discovered, leaving
us with 2,710 program pairs.

For each pair of an ill-typed program and its fix, we run \toolname and
collect two sets of source locations:
(1) the source location corresponding to the stuck term; and
(2) the source locations that \emph{produced} the values inside the
stuck term.
Intuitively, these two classes of locations correspond to \emph{sinks}
and \emph{sources} for typing constraints.
For example, in the @sqsum@ program from \S~\ref{sec:advantage-traces}
the stuck term is \verb!0 @ 1!.
This corresponds to the call to \verb!@! on line 3, and contains
the literal @0@ from line 2 and the value @1@ produced by the
@*@ on line 3.

We compare \toolname's witness-based predictions against a baseline of
the \ocaml compiler as well as the state-of-the-art 
localization tools \sherrloc and \mycroft.
\sherrloc~\cite{Zhang2014-lv} attempts to predict the most likely source
of a type error by searching the typing constraint graph for constraints
that participate in many unsatisfiable paths and few satisfiable paths.
\mycroft~\cite{Loncaric2016-uk} reduces the localization problem to
MaxSAT by searching for a minimal subset of constraints that can be
removed, such that the resulting system is satisfiable.
Both tools produce a \emph{set} of equally-likely expressions to blame
for the error (in practice the set contains only a few expressions),
similar to \toolname's witness-based predictions.

We evaluate each tool based on whether \emph{any} of its predictions
identifies a changed expression.
There were a number of programs where \mycroft or \sherrloc
encountered an unsupported language feature or timed out after one
minute, or where \toolname failed to produce a witness.
We discard all such programs in our evaluation to level the playing
field, around 15\% for each tool, leaving us with a benchmark set of
1,759 programs.

\begin{figure}[t]
\includegraphics[width=0.7\linewidth]{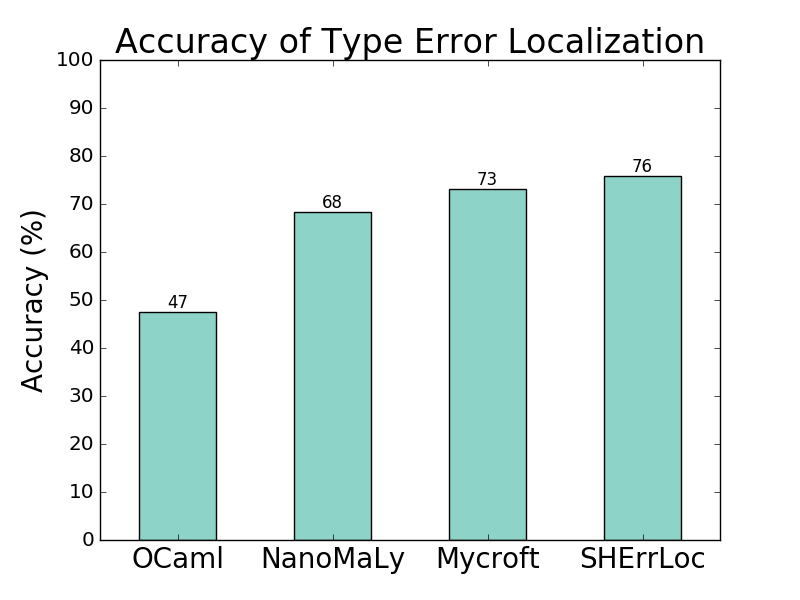}
\caption{Accuracy of type error localization. \toolname's witness-based
  predictions outperform \ocaml by 21 points, and are competitive
  with the state-of-the-art tools \mycroft and \sherrloc.}
\label{fig:results-blame}
\end{figure}

\paragraph{Results}
Figure~\ref{fig:results-blame} summarizes our results, which show that
\toolname's witnesses are competitive with \mycroft and \sherrloc in
automatically locating the source of a type error.
\toolname, \mycroft, and \sherrloc all outperform the \ocaml compiler,
which is not surprising given that they can produce multiple possible
error locations, while the \ocaml compiler is limited to one predicted
error location.
Interestingly, while all tools have a median of 2 predicted error
locations per program, \mycroft and \sherrloc have a long tail with a
maximum of 22 (\resp 11) locations, while \toolname's maximum is 5
locations.
We also note that while \mycroft and \sherrloc were designed
specifically to \emph{localize} type errors, \toolname's foremost
purpose is to \emph{explain} them, we consider its ability to localize
type errors an added benefit.

\paragraph{Threats to Validity}
Our benchmarks were drawn from students in an undergraduate course
at \ucsdbench\ and may not be representative of other student bodies.
We mitigate this threat with a large empirical evaluation of 1,759
programs, drawn from a cohort of 46 students.
A similar threat is that students are not industrial programmers, thus
our results may not translate to large-scale software engineering.
%
%
However, in our experience programmers are able to construct a mental
model of type systems after sufficient exposure, at which point
traditional error reports may suffice.
We are thus particularly interested in aiding novice programmers as
they learn to work with the type system.

Our definition of the next well-typed program as the intended ground
truth answer is another threat to validity. Students might submit
multiple well-typed ``rewrites'' between the initial ill-typed program
and the final intended answer.
Our approach to discarding outliers is intended to mitigate this threat.
A similar threat is our removal of programs where any of the tools could
not produce an answer.
It may be, for example, that \mycroft and \sherrloc are particularly
effective on programs that do not admit dynamic witnesses.
Finally, our use of student fixes as oracles for the source of type
errors assumes that students are able to correctly identify the source.
As the students are in the process of learning \ocaml and the type
system, this assumption may be faulty, \emph{expert} users may disagree
with the student fixes.
We believe, however, that it is reasonable to use student fixes as
oracles, as the student is the best judge of what she \emph{intended} to
do.


\subsection{Discussion}
\label{sec:discussion}

To summarize, our experiments demonstrate that \nanomaly finds witnesses
to type errors:
(1) with high coverage in a timespan amenable to compile-time analysis;
(2) with traces that have a low median complexity of 5 jumps;
(3) that are more helpful to novice programmers than traditional type
error messages; and
(4) that can be used to automatically locate the source of a type error.

There are, of course, drawbacks to our approach.
%
In the sequel we discuss a selection of drawbacks, and how we might
address them in future work.

\paragraph{Random Generation}
Random test generation has difficulty generating highly constrained
values, \eg\ red-black trees or a pair of equal integers. If the type
error is hidden behind a complex branch condition \nanomaly\ may not be
able to trigger it. Exhaustive testing and dynamic-symbolic execution
can address this short-coming by performing an exhaustive search for
inputs (\emph{resp}.\ paths through the program). Our approach does not
rely on random generation, we could easily substitute it for
dynamic-symbolic execution by extending the evaluation relation to
maintain a path condition and replacing the \gensym function with a call
to the constraint solver.
As our experiments show, however, novice programs do not appear to
require more advanced search techniques, likely because they tend to be
simple.


\paragraph{Trace Explosion}
Though the average complexity of our generated traces is low in terms of
jumps, there are some extreme outliers.
We cannot reasonably expect a novice user to explore a trace containing
50+ terms and draw a conclusion about which pieces contributed to the
bug in their program.
Enhancing our visualization to slice out program paths relevant to
specific values~\cite{Perera2012-dy}, would likely help alleviate this
issue, allowing users to highlight a confusing value and ask: ``Where
did this come from?''

\paragraph{Non-Parametric Function Type}
As we discussed in \S~\ref{sec:how-safe} some ill-typed programs
lack a witness in our semantics due to our use of a non-parametric
type $\tfun$ for functions.
These programs cannot ``go wrong'', strictly speaking, but would be very
difficult to \emph{use} in practice.
We also note that many of these programs induce cyclic typing constraints,
causing infinite-type errors, which in our experience can be particularly
difficult to debug (and to explain to novices).
Better support for these programs would be welcome.
For example, we might track how the types of inputs change between
recursive calls.
If we cannot find a traditional witness, we could then produce a trace
expanded to show these particular steps.

\paragraph{Ad-Hoc Polymorphism}

Also discussed in \S~\ref{sec:how-safe}, our approach can only support
ad-hoc polymorphism (\eg\ type-classes in \haskell\ or polymorphic
comparison functions in \ocaml) in limited cases where we have enough
typing information at the call-site to resolve the overloading. This
issue is uncommon in \ocaml\ (we detected it in around 5\% of our
benchmarks), but it would surely be exacerbated by a language like
\haskell, which overloads not only functions but also numeric literals,
as well as strings and lists if one enables the respective language
extensions.
We suspect that either dynamic-symbolic execution or speculative
instantiation of holes would allow us to handle ad-hoc polymorphism, but
defer a proper treatment to future work.

\paragraph{Traversal Bias}

A common problem with typecheckers is that the order in which the
typechecker traverses the abstract syntax tree \emph{biases} it in favor
of blaming expressions that are seen later~\cite{McAdam1998-ub}. This
usually takes the form of a left-to-right bias with respect to the
source code (terms that appear later \emph{textually} are more likely to
be blamed), but in our case the bias is with respect to the execution
trace.

Incorporating our notion of type sources from \S~\ref{sec:locating} into
the visualization, \eg by including those reductions in the initial
visualization, may help alleviate our bias in a similar manner to
McAdam's proposal. Hage and Heeren~\shortcite{Hage2009-em} offer another
solution that allows the compiler author to selectively control the
bias, and thus produce better errors, by prioritizing typing
constraints. Unfortunately, due to \toolname's dynamic nature, providing
this sort of control would likely require selectively changing the order
of evalution; while sound for the pure subset of \ocaml that we address,
this could nonetheless confuse newcomers to the language even more.


\paragraph{Side Effects}

While our implementation does not currently support side effects, we
suspect it would not be difficult to add support.
The search procedure is easily extended to support mutation by
maintaining a store in the evaluation relation, the main complexity
would be in extending the trace visualization to demonstrate mutation.
Incorporating mutation directly into our reduction-based visualization
would be difficult, as mutation would have non-local effects on the
expressions and may be difficult for students to follow.
Instead, we could follow the example of
\textsc{Python Tutor}~\cite{GuoSIGCSE2013}
and provide a separate visualization of the mutable store, with references
visualized as constant pointers to changing objects.



\section{Related Work}
\label{sec:related-work}


\paragraph{Localizing and Repairing Type Errors}
\label{sec:diagnosis-repair}
%
Many groups have explored techniques to improve the error locations
reported by static type checkers.
%
The traditional Damas-Milner type inference algorithm~\cite{Damas1982-uw}
reports the first program location where a type mismatch is discovered
(subject to the traversal strategy~\cite{Lee1998-ys}).
As a result the error can be reported far away from its
source~\cite{McAdam1998-ub} without enough information to guide the
user.
Type-error slicing~\cite{Haack2003-vc,Schilling2011-yf,Rahli2015-tt,Sagonas2013-bf,Gast2004-zd,Neubauer2003-xv}
recognizes this flaw and instead produces a slice of the program
containing \emph{all} program locations that are connected to the type
error.
%
%
Though the program slice must contain the source of the error, it can
suffer from the opposite problem of providing \emph{too much}
information, motivating recent work in ranking the candidate locations.
Zhang~\etal~\shortcite{Zhang2014-lv,Zhang2015-yu} present an algorithm for
identifying the most likely culprit using Bayesian reasoning.
Pavlinovic~\etal~\shortcite{Pavlinovic2014-mr,Pavlinovic2015-kh}
translate the 
localization problem to a MaxSMT optimization problem, using
compiler-provided weights to rank the possible sources.
Loncaric~\etal~\shortcite{Loncaric2016-uk} improve the scalability of
Pavlinovic~\etal by reusing the existing type checker as
a theory solver in the Nelson-Oppen~\shortcite{Nelson1979-td}
style, thus requiring only a MaxSAT solver.

In addition to localizing the error, Lerner~\etal~\shortcite{Lerner2007-dt} attempt to
suggest a fix by replacing expressions (or removing them entirely) with
alternatives based on the surrounding program context.
Chen~\&~Erwig~\shortcite{Chen2014-gd} use a variational type system to allow for the
possibility of changing an expression's type, and search for an
expression whose type can be changed such that type inference would
succeed.
%
%
In contrast to Lerner~\etal, who search for changes at the value-level,
Chen~\&~Erwig search at the type-level and are thus complete due the finite
universe of types used in the program.
%
%

In contrast to these approaches, we do not attempt to localize or fix
the type error. Instead we try to explain it to the user using a
dynamic witness that demonstrates how the program is not just
ill-typed but truly wrong. In addition, allowing users to run their
program (even knowing that it is wrong)
enables experimentation and the
use of debuggers to step through the program and investigate its
evolution.

\paragraph{Improving Error Messages}
The content and quality of the error messages themselves has also been
studied extensively.
Marceau~\etal~\shortcite{Marceau2011-ok,Marceau2011-cy} study the
effectiveness of error messages in novice environments and present
suggestions for improving their quality and consistency.
Hage~\&~Heeren~\shortcite{Hage2006-hc} identify a variety of general
heuristics to improve the quality of type error messages, based on their
teaching experience.
Chargu{\'e}raud~\shortcite{Chargueraud2015-dc} presents a tabular format
for type errors that can provide multiple explanations in a compact form.
Heeren~\etal~\shortcite{Heeren2003-db},
Christiansen~\shortcite{Christiansen2014-qc}, and
Serrano~\&~Hage~\shortcite{Serrano2016-oo}
provide methods for library authors to specialize
type errors with domain-specific knowledge.
The difference with our work is more pronounced here as we do not
attempt to improve the quality of the error message, instead we search
for a witness to the error and explain it with the resulting execution
trace.

\paragraph{Running Ill-Typed Programs}
\label{sec:running-ill-typed}
Vytiniotis~\etal~\shortcite{Vytiniotis2012-gh} extend the \haskell
compiler GHC to support compiling ill-typed programs, but their intent
is rather different from ours. Their goal was to allow programmers to
incrementally test refactorings, which often cause type errors in
distant functions. They replace any expression that fails to type
check with a \emph{runtime} error, but do not check types
at runtime.
Bayne~\etal~\shortcite{Bayne2011-cn} also provide a semantics for running
ill-typed (\java) programs, but in constrast transform the program to
perform nearly all type checking at run-time. The key difference between
Bayne~\etal\ and our work is that we use the dynamic semantics to
automatically search for a witness to the type error, while their focus
is on incremental, programmer-driven testing.

\paragraph{Testing}\label{sec:testing}
\nanomaly is at its heart a test generator, and as such,
builds on a rich line of work.
Our use of holes to represent unknown values is inspired by the work of
Runciman, Naylor, and Lindblad~\cite{Runciman2008-ka,Naylor2007-mi,Lindblad2007-oy},
who use lazy evaluation to drastically reduce the search space for
exhaustive test generation, by grouping together equivalent inputs by
the set of values they force. An exhaustive search is complete (up to
the depth bound), if a witness exists it will be found, but due to the
exponential blowup in the search space the depth bound can be quite
limited without advanced grouping and filtering techniques.
Our search is not exhaustive; instead we use random generation to fill
in holes on demand.
Random test generation~\cite{Claessen2000-lj,Csallner2004-bf,Pacheco2007-at}
is by its nature incomplete, but is able to check larger inputs than
exhaustive testing as a result.

Instead of enumerating values, which may trigger the same path through
the program, one might enumerate paths.
Dynamic-symbolic execution~\cite{Godefroid2005-am,Cadar2008-kg,Tillmann2008-qc}
combines symbolic execution (to track which path a given input triggers)
with concrete execution (to ensure failures are not spurious). The
system collects a path condition during execution, which tracks
symbolically what conditions must be met to trigger the current
path. Upon successfully completing a test run, it negates the path
condition and queries a solver for another set of inputs that satisfy
the negated path condition, \ie inputs that will not trigger the same
path. Thus, it can prune the search space much faster than techniques
based on enumerating values, but is limited by the expressiveness of the
underlying solver.

Our operational semantics is amenable to dynamic-symbolic execution, one
would just need to collect the path condition and replace our
implementation of \gensym by a call to the solver. We chose to use lazy,
random generation instead because it is efficient, does not incur
the overhead of an external solver, and produces high coverage for our
domain of novice programs.

A function's type is a theorem about its behavior.
Thus, \toolname's witnesses can be viewed as \emph{counter-examples},
thereby connecting it to work on using test cases to find
counter-examples prior to starting a proof~\cite{ACL2Testing, Nguyen2015-oo, Seidel15}.


\paragraph{Program Exploration}

Flanagan~\etal~\shortcite{Flanagan96} describe a static debugger for Scheme, which helps
the programmer interactively visualize problematic source-sink flows
corresponding to soft-typing errors. The debugger allows the user to explore
an abstract reduction graph computed from a static value set analysis of
the program. In contrast, \toolname generates witnesses and allows the user
to explore the resulting dynamic execution.

Clements~\etal~\shortcite{Clements2001-qj} present a reduction-based
visualization of program execution similar to \toolname's, though their
interaction model is closer to that of a traditional step-debugger,
limited to taking single step forwards or backwards. In contrast,
\toolname first presents an overview of the whole computation, and then
allows the user to focus in on the interesting reductions.

Perera~\etal~\shortcite{Perera2012-dy} present a tracing semantics
for functional programs that tags values with their provenance, enabling
a form of backwards program slicing from a final value to the sequence
of reductions that produced it. Notably, they allow the user to supply a
\emph{partial value} --- containing holes --- and present a partial slice,
containing only those steps that affected the the partial value.
Perera~\etal\ focus on backward exploration; in contrast, our
visualization supports forward \emph{and} backward exploration, though
our backward steps are more limited.
Specifically, we do not support selecting a value and inserting the
intermediate terms that preceded it while ignoring unrelated computation
steps. 




\section*{Acknowledgments}

We thank Ethan Chan, Matthew Chan and Timothy Nguyen for assisting
with our user study, and we thank the anonymous reviewers and
Matthias Felleisen for their insightful feedback on earlier drafts
of this paper.

{
\bibliographystyle{jfp}
\bibliography{main}

\begin{thebibliography}{}

\bibitem[\protect\citename{Bayne {\em et~al.}\relax, }2011]{Bayne2011-cn}
Bayne, Michael, Cook, Richard, \& Ernst, Michael~D. (2011).
\newblock Always-available static and dynamic feedback.
\newblock {\em Pages  521--530 of:} {\em Proceedings of the 33rd International
  Conference on Software Engineering}.
\newblock ICSE '11.
\newblock New York, NY, USA: ACM.

\bibitem[\protect\citename{Cadar {\em et~al.}\relax, }2008]{Cadar2008-kg}
Cadar, Cristian, Dunbar, Daniel, \& Engler, Dawson. (2008).
\newblock {KLEE}: Unassisted and automatic generation of high-coverage tests
  for complex systems programs.
\newblock {\em Pages  209--224 of:} {\em Proceedings of the 8th {USENIX}
  Conference on Operating Systems Design and Implementation}.
\newblock OSDI'08.
\newblock Berkeley, CA, USA: USENIX Association.

\bibitem[\protect\citename{Chamarthi {\em et~al.}\relax, }2011]{ACL2Testing}
Chamarthi, Harsh~Raju, Dillinger, Peter~C., Kaufmann, Matt, \& Manolios,
  Panagiotis. (2011).
\newblock Integrating testing and interactive theorem proving.
\newblock {\em Pages  4--19 of:} {\em Proceedings of the 10th International
  Workshop on the {ACL2} Theorem Prover and its Applications}.
\newblock ACL2 '11.

\bibitem[\protect\citename{Chargu\'eraud, }2014]{Chargueraud2015-dc}
Chargu\'eraud, Arthur. (2014).
\newblock Improving type error messages in ocaml.
\newblock {\em Pages  80--97 of:} {\em {\rm Proceedings} of the ML Family/OCaml
  Users and Developers Workshops}.
\newblock Electronic Proceedings in Theoretical Computer Science, vol. 198.
\newblock Open Publishing Association.

\bibitem[\protect\citename{Chen \& Erwig, }2014]{Chen2014-gd}
Chen, Sheng, \& Erwig, Martin. (2014).
\newblock Counter-factual typing for debugging type errors.
\newblock {\em Pages  583--594 of:} {\em Proceedings of the 41st {ACM}
  {SIGPLAN-SIGACT} Symposium on Principles of Programming Languages}.
\newblock POPL '14.
\newblock New York, NY, USA: ACM.

\bibitem[\protect\citename{Christiansen, }2014]{Christiansen2014-qc}
Christiansen, David~Raymond. (2014).
\newblock Reflect on your mistakes! lightweight domain-specific error messages.
\newblock  {\em Preproceedings of the 15th Symposium on Trends in Functional
  Programming}.

\bibitem[\protect\citename{Claessen \& Hughes, }2000]{Claessen2000-lj}
Claessen, Koen, \& Hughes, John. (2000).
\newblock {QuickCheck}: A lightweight tool for random testing of haskell
  programs.
\newblock {\em Pages  268--279 of:} {\em Proceedings of the Fifth {ACM}
  {SIGPLAN} International Conference on Functional Programming}.
\newblock ICFP '00.
\newblock New York, NY, USA: ACM.

\bibitem[\protect\citename{Clements {\em et~al.}\relax, }2001]{Clements2001-qj}
Clements, John, Flatt, Matthew, \& Felleisen, Matthias. (2001).
\newblock Modeling an algebraic stepper.
\newblock {\em Pages  320--334 of:} {\em Programming Languages and Systems}.
\newblock Lecture Notes in Computer Science.
\newblock Springer, Berlin, Heidelberg.

\bibitem[\protect\citename{Csallner \& Smaragdakis, }2004]{Csallner2004-bf}
Csallner, Christoph, \& Smaragdakis, Yannis. (2004).
\newblock {JCrasher}: an automatic robustness tester for java.
\newblock {\em Softw. pract. exp.}, {\bf 34}(11), 1025--1050.

\bibitem[\protect\citename{Damas \& Milner, }1982]{Damas1982-uw}
Damas, Luis, \& Milner, Robin. (1982).
\newblock Principal type-schemes for functional programs.
\newblock {\em Pages  207--212 of:} {\em Proceedings of the 9th {ACM}
  {SIGPLAN-SIGACT} Symposium on Principles of Programming Languages}.
\newblock POPL '82.
\newblock New York, NY, USA: ACM.

\bibitem[\protect\citename{Felleisen {\em et~al.}\relax,
  }2009]{Felleisen2009-ya}
Felleisen, Matthias, Findler, Robert~Bruce, \& Flatt, Matthew. (2009).
\newblock {\em Semantics engineering with {PLT} redex}. 1st edn.
\newblock The MIT Press.

\bibitem[\protect\citename{Flanagan {\em et~al.}\relax, }1996]{Flanagan96}
Flanagan, Cormac, Flatt, Matthew, Krishnamurthi, Shriram, Weirich, Stephanie,
  \& Felleisen, Matthias. (1996).
\newblock Catching bugs in the web of program invariants.
\newblock {\em Pages  23--32 of:} {\em Proceedings of the ACM SIGPLAN 1996
  Conference on Programming Language Design and Implementation}.
\newblock PLDI '96.
\newblock New York, NY, USA: ACM.

\bibitem[\protect\citename{Fleiss, }1971]{Fleiss1971-du}
Fleiss, Joseph~L. (1971).
\newblock Measuring nominal scale agreement among many raters.
\newblock {\em Psychol. bull.}, {\bf 76}(5), 378.

\bibitem[\protect\citename{Gast, }2004]{Gast2004-zd}
Gast, Holger. (2004).
\newblock Explaining {ML} type errors by data flows.
\newblock {\em Pages  72--89 of:} {\em Implementation and Application of
  Functional Languages}.
\newblock Lecture Notes in Computer Science.
\newblock Springer Berlin Heidelberg.

\bibitem[\protect\citename{Godefroid {\em et~al.}\relax,
  }2005]{Godefroid2005-am}
Godefroid, Patrice, Klarlund, Nils, \& Sen, Koushik. (2005).
\newblock {DART}: Directed automated random testing.
\newblock {\em Pages  213--223 of:} {\em Proceedings of the 2005 {ACM}
  {SIGPLAN} Conference on Programming Language Design and Implementation}.
\newblock PLDI '05.
\newblock New York, NY, USA: ACM.

\bibitem[\protect\citename{Guo, }2013]{GuoSIGCSE2013}
Guo, Philip~J. (2013).
\newblock {Online Python Tutor}: Embeddable web-based program visualization for
  {CS} education.
\newblock {\em Pages  579--584 of:} {\em Proceedings of the 44th ACM Technical
  Symposium on Computer Science Education}.
\newblock SIGCSE '13.
\newblock New York, NY, USA: ACM.

\bibitem[\protect\citename{Haack \& Wells, }2003]{Haack2003-vc}
Haack, Christian, \& Wells, J~B. (2003).
\newblock Type error slicing in implicitly typed {Higher-Order} languages.
\newblock {\em Pages  284--301 of:} {\em Programming Languages and Systems}.
\newblock Lecture Notes in Computer Science.
\newblock Springer Berlin Heidelberg.

\bibitem[\protect\citename{Hage \& Heeren, }2006]{Hage2006-hc}
Hage, Jurriaan, \& Heeren, Bastiaan. (2006).
\newblock Heuristics for type error discovery and recovery.
\newblock {\em Pages  199--216 of:} {\em Implementation and Application of
  Functional Languages}.
\newblock Lecture Notes in Computer Science.
\newblock Springer Berlin Heidelberg.

\bibitem[\protect\citename{Hage \& Heeren, }2009]{Hage2009-em}
Hage, Jurriaan, \& Heeren, Bastiaan. (2009).
\newblock Strategies for solving constraints in type and effect systems.
\newblock {\em Electron. notes theor. comput. sci.}, {\bf 236}(2~Apr.),
  163--183.

\bibitem[\protect\citename{Heeren {\em et~al.}\relax, }2003]{Heeren2003-db}
Heeren, Bastiaan, Hage, Jurriaan, \& Swierstra, S~Doaitse. (2003).
\newblock Scripting the type inference process.
\newblock {\em Pages  3--13 of:} {\em Proceedings of the eighth {ACM} {SIGPLAN}
  international conference on Functional programming},  vol. 38.
\newblock ACM.

\bibitem[\protect\citename{Krippendorff, }2012]{Krippendorff2012-wd}
Krippendorff, K. (2012).
\newblock {\em Content analysis: An introduction to its methodology}.
\newblock SAGE Publications.

\bibitem[\protect\citename{Landis \& Koch, }1977]{Landis1977-ey}
Landis, J~R, \& Koch, G~G. (1977).
\newblock The measurement of observer agreement for categorical data.
\newblock {\em Biometrics}, {\bf 33}(1), 159--174.

\bibitem[\protect\citename{Lee \& Yi, }1998]{Lee1998-ys}
Lee, Oukseh, \& Yi, Kwangkeun. (1998).
\newblock Proofs about a folklore let-polymorphic type inference algorithm.
\newblock {\em Acm trans. program. lang. syst.}, {\bf 20}(4), 707--723.

\bibitem[\protect\citename{Lempsink, }2009]{Lempsink2009-xf}
Lempsink, Eelco. (2009).
\newblock {\em Generic type-safe diff and patch for families of datatypes}.
\newblock M.Phil. thesis, Universiteit Utrecht.

\bibitem[\protect\citename{Lerner {\em et~al.}\relax, }2006]{Lerner2006-pj}
Lerner, Benjamin, Grossman, Dan, \& Chambers, Craig. (2006).
\newblock Seminal: Searching for {ML} type-error messages.
\newblock {\em Pages  63--73 of:} {\em Proceedings of the 2006 Workshop on
  {ML}}.
\newblock ML '06.
\newblock New York, NY, USA: ACM.

\bibitem[\protect\citename{Lerner {\em et~al.}\relax, }2007]{Lerner2007-dt}
Lerner, Benjamin~S, Flower, Matthew, Grossman, Dan, \& Chambers, Craig. (2007).
\newblock Searching for type-error messages.
\newblock {\em Pages  425--434 of:} {\em Proceedings of the 28th {ACM}
  {SIGPLAN} Conference on Programming Language Design and Implementation}.
\newblock PLDI '07.
\newblock New York, NY, USA: ACM.

\bibitem[\protect\citename{Lindblad, }2007]{Lindblad2007-oy}
Lindblad, Fredrik. (2007).
\newblock Property directed generation of {First-Order} test data.
\newblock {\em Pages  105--123 of:} Moraz\'{a}n, Marco~T (ed), {\em Proceedings
  of the Eighth Symposium on Trends in Functional Programming}.
\newblock TFP '07, vol. 8.

\bibitem[\protect\citename{Loncaric {\em et~al.}\relax, }2016]{Loncaric2016-uk}
Loncaric, Calvin, Chandra, Satish, Schlesinger, Cole, \& Sridharan, Manu.
  (2016).
\newblock A practical framework for type inference error explanation.
\newblock {\em Pages  781--799 of:} {\em Proceedings of the 2016 {ACM}
  {SIGPLAN} International Conference on {Object-Oriented} Programming, Systems,
  Languages, and Applications}.
\newblock ACM.

\bibitem[\protect\citename{Mann \& Whitney, }1947]{Mann1947-fd}
Mann, H~B, \& Whitney, D~R. (1947).
\newblock On a test of whether one of two random variables is stochastically
  larger than the other.
\newblock {\em Ann. math. stat.}, {\bf 18}(1), 50--60.

\bibitem[\protect\citename{Marceau {\em et~al.}\relax, }2011a]{Marceau2011-cy}
Marceau, Guillaume, Fisler, Kathi, \& Krishnamurthi, Shriram. (2011a).
\newblock Measuring the effectiveness of error messages designed for novice
  programmers.
\newblock {\em Pages  499--504 of:} {\em Proceedings of the 42Nd {ACM}
  Technical Symposium on Computer Science Education}.
\newblock SIGCSE '11.
\newblock New York, NY, USA: ACM.

\bibitem[\protect\citename{Marceau {\em et~al.}\relax, }2011b]{Marceau2011-ok}
Marceau, Guillaume, Fisler, Kathi, \& Krishnamurthi, Shriram. (2011b).
\newblock Mind your language: On novices' interactions with error messages.
\newblock {\em Pages  3--18 of:} {\em Proceedings of the 10th {SIGPLAN}
  Symposium on New Ideas, New Paradigms, and Reflections on Programming and
  Software}.
\newblock Onward! 2011.
\newblock New York, NY, USA: ACM.

\bibitem[\protect\citename{McAdam, }1998]{McAdam1998-ub}
McAdam, Bruce~J. (1998).
\newblock On the unification of substitutions in type inference.
\newblock {\em Pages  137--152 of:} Hammond, Kevin, Davie, Tony, \& Clack,
  Chris (eds), {\em Implementation of Functional Languages}.
\newblock Lecture Notes in Computer Science.
\newblock Springer Berlin Heidelberg.

\bibitem[\protect\citename{Naylor \& Runciman, }2007]{Naylor2007-mi}
Naylor, M, \& Runciman, Colin. (2007).
\newblock Finding inputs that reach a target expression.
\newblock {\em Pages  133--142 of:} {\em Seventh {IEEE} International Working
  Conference on Source Code Analysis and Manipulation}.
\newblock SCAM '07.

\bibitem[\protect\citename{Nelson \& Oppen, }1979]{Nelson1979-td}
Nelson, Greg, \& Oppen, Derek~C. (1979).
\newblock Simplification by cooperating decision procedures.
\newblock {\em Acm trans. program. lang. syst.}, {\bf 1}(2), 245--257.

\bibitem[\protect\citename{Neubauer \& Thiemann, }2003]{Neubauer2003-xv}
Neubauer, Matthias, \& Thiemann, Peter. (2003).
\newblock Discriminative sum types locate the source of type errors.
\newblock {\em Pages  15--26 of:} {\em Proceedings of the Eighth {ACM}
  {SIGPLAN} International Conference on Functional Programming}.
\newblock ICFP '03.
\newblock New York, NY, USA: ACM.

\bibitem[\protect\citename{Nguyen \& Van~Horn, }2015]{Nguyen2015-oo}
Nguyen, Ph{\'u}c~C, \& Van~Horn, David. (2015).
\newblock Relatively complete counterexamples for higher-order programs.
\newblock {\em Pages  446--456 of:} {\em Proceedings of the 36th {ACM}
  {SIGPLAN} Conference on Programming Language Design and Implementation}.
\newblock PLDI 2015.
\newblock New York, NY, USA: ACM.

\bibitem[\protect\citename{Pacheco {\em et~al.}\relax, }2007]{Pacheco2007-at}
Pacheco, Carlos, Lahiri, Shuvendu~K, Ernst, Michael~D, \& Ball, Thomas. (2007).
\newblock {Feedback-Directed} random test generation.
\newblock {\em Pages  75--84 of:} {\em 29th International Conference on
  Software Engineering}.
\newblock ICSE '07.

\bibitem[\protect\citename{Pavlinovic {\em et~al.}\relax,
  }2014]{Pavlinovic2014-mr}
Pavlinovic, Zvonimir, King, Tim, \& Wies, Thomas. (2014).
\newblock Finding minimum type error sources.
\newblock {\em Pages  525--542 of:} {\em Proceedings of the 2014 {ACM}
  International Conference on Object Oriented Programming Systems Languages \&
  Applications}.
\newblock OOPSLA '14.
\newblock New York, NY, USA: ACM.

\bibitem[\protect\citename{Pavlinovic {\em et~al.}\relax,
  }2015]{Pavlinovic2015-kh}
Pavlinovic, Zvonimir, King, Tim, \& Wies, Thomas. (2015).
\newblock Practical {SMT-based} type error localization.
\newblock {\em Pages  412--423 of:} {\em Proceedings of the 20th {ACM}
  {SIGPLAN} International Conference on Functional Programming}.
\newblock ICFP 2015.
\newblock New York, NY, USA: ACM.

\bibitem[\protect\citename{Perera {\em et~al.}\relax, }2012]{Perera2012-dy}
Perera, Roly, Acar, Umut~A, Cheney, James, \& Levy, Paul~Blain. (2012).
\newblock Functional programs that explain their work.
\newblock {\em Pages  365--376 of:} {\em Proceedings of the 17th {ACM}
  {SIGPLAN} International Conference on Functional Programming}.
\newblock ICFP '12.
\newblock New York, NY, USA: ACM.

\bibitem[\protect\citename{Rahli {\em et~al.}\relax, }2015]{Rahli2015-tt}
Rahli, Vincent, Wells, Joe, Pirie, John, \& Kamareddine, Fairouz. (2015).
\newblock Skalpel: A type error slicer for standard {ML}.
\newblock {\em Electron. notes theor. comput. sci.}, {\bf 312}(24~Apr.),
  197--213.

\bibitem[\protect\citename{Runciman {\em et~al.}\relax, }2008]{Runciman2008-ka}
Runciman, Colin, Naylor, Matthew, \& Lindblad, Fredrik. (2008).
\newblock Smallcheck and lazy smallcheck: Automatic exhaustive testing for
  small values.
\newblock {\em Pages  37--48 of:} {\em Proceedings of the First {ACM} {SIGPLAN}
  Symposium on Haskell}.
\newblock Haskell '08.
\newblock New York, NY, USA: ACM.

\bibitem[\protect\citename{Sagonas {\em et~al.}\relax, }2013]{Sagonas2013-bf}
Sagonas, Konstantinos, Silva, Josep, \& Tamarit, Salvador. (2013).
\newblock Precise explanation of success typing errors.
\newblock {\em Pages  33--42 of:} {\em Proceedings of the {ACM} {SIGPLAN} 2013
  Workshop on Partial Evaluation and Program Manipulation}.
\newblock PEPM '13.
\newblock New York, NY, USA: ACM.

\bibitem[\protect\citename{Schilling, }2011]{Schilling2011-yf}
Schilling, Thomas. (2011).
\newblock {Constraint-Free} type error slicing.
\newblock {\em Pages  1--16 of:} {\em Trends in Functional Programming}.
\newblock Lecture Notes in Computer Science.
\newblock Springer Berlin Heidelberg.

\bibitem[\protect\citename{Seidel {\em et~al.}\relax, }2015]{Seidel15}
Seidel, Eric~L., Vazou, Niki, \& Jhala, Ranjit. (2015).
\newblock Type targeted testing.
\newblock {\em Pages  812--836 of:} {\em Proceedings of the 24th European
  Symposium on Programming on Programming Languages and Systems}.
\newblock ESOP '15.
\newblock New York, NY, USA: Springer-Verlag New York, Inc.

\bibitem[\protect\citename{Seidel {\em et~al.}\relax, }2016]{Seidel2016-ul}
Seidel, Eric~L, Jhala, Ranjit, \& Weimer, Westley. (2016).
\newblock Dynamic witnesses for static type errors (or, ill-typed programs
  usually go wrong).
\newblock {\em Pages  228--242 of:} {\em Proceedings of the 21st {ACM}
  {SIGPLAN} International Conference on Functional Programming}.
\newblock ICFP '16.
\newblock ACM.

\bibitem[\protect\citename{Serrano \& Hage, }2016]{Serrano2016-oo}
Serrano, Alejandro, \& Hage, Jurriaan. (2016).
\newblock Type error diagnosis for embedded {DSLs} by {Two-Stage} specialized
  type rules.
\newblock {\em Pages  672--698 of:} {\em Programming Languages and Systems}.
\newblock Lecture Notes in Computer Science.
\newblock Springer Berlin Heidelberg.

\bibitem[\protect\citename{Seven, }2014]{Seven2014-gf}
Seven, Doug. 2014 (17~Apr.).
\newblock {\em Knightmare: A {DevOps} cautionary tale}.
\newblock
  \url{https://dougseven.com/2014/04/17/knightmare-a-devops-cautionary-tale/}.
\newblock Accessed: 2017-4-24.

\bibitem[\protect\citename{Tillmann \& de~Halleux, }2008]{Tillmann2008-qc}
Tillmann, Nikolai, \& de~Halleux, Jonathan. (2008).
\newblock {Pex--White} box test generation for .{NET}.
\newblock {\em Pages  134--153 of:} Beckert, Bernhard, \& H{\"{a}}hnle, Reiner
  (eds), {\em Tests and Proofs}.
\newblock Lecture Notes in Computer Science.
\newblock Springer Berlin Heidelberg.

\bibitem[\protect\citename{Vytiniotis {\em et~al.}\relax,
  }2012]{Vytiniotis2012-gh}
Vytiniotis, Dimitrios, Peyton~Jones, Simon, \& Magalh\~{a}es, Jos\'{e}~Pedro.
  (2012).
\newblock Equality proofs and deferred type errors: A compiler pearl.
\newblock {\em Pages  341--352 of:} {\em Proceedings of the 17th {ACM}
  {SIGPLAN} International Conference on Functional Programming}.
\newblock ICFP '12.
\newblock New York, NY, USA: ACM.

\bibitem[\protect\citename{Wheeler, }2014]{Wheeler2014-fg}
Wheeler, David~A. 2014 (23~Nov.).
\newblock {\em The apple goto fail vulnerability: lessons learned}.
\newblock \url{https://www.dwheeler.com/essays/apple-goto-fail.html}.
\newblock Accessed: 2017-4-24.

\bibitem[\protect\citename{Zhang \& Myers, }2014]{Zhang2014-lv}
Zhang, Danfeng, \& Myers, Andrew~C. (2014).
\newblock Toward general diagnosis of static errors.
\newblock {\em Pages  569--581 of:} {\em Proceedings of the 41st {ACM}
  {SIGPLAN-SIGACT} Symposium on Principles of Programming Languages}.
\newblock POPL '14.
\newblock New York, NY, USA: ACM.

\bibitem[\protect\citename{Zhang {\em et~al.}\relax, }2015]{Zhang2015-yu}
Zhang, Danfeng, Myers, Andrew~C, Vytiniotis, Dimitrios, \& Peyton-Jones, Simon.
  (2015).
\newblock Diagnosing type errors with class.
\newblock {\em Pages  12--21 of:} {\em Proceedings of the 36th {ACM} {SIGPLAN}
  Conference on Programming Language Design and Implementation}.
\newblock PLDI 2015.
\newblock New York, NY, USA: ACM.

\end{thebibliography}
}

\includeTechReport
{

\clearpage
\appendix
\section{Proofs for Section~\ref{sec:searching-witness}}
\label{sec:proofs}

\begin{proof}[Proof of Lemma~\ref{lem:resolve-compat}]
  By induction on $\trace$.
  In the base case $\trace = \triple{\eapp{f}{\vhole{\thole}}}{\emptysu}{\emptysu}$
  and $\thole$ is trivially a refinement of $\vhole{\thole}$.
  In the inductive case, consider the single-step extension of $\trace$,
  $\trace' = \trace,\triple{e'}{\vsu'}{\tsu'}$.
  We show by case analysis on the evaluation rules that if
  $\vsub{\resolve{\thole}{\tsu}}{\resolve{\ehole}{\vsu}}$, then
  $\vsub{\resolve{\thole}{\tsu'}}{\resolve{\ehole}{\vsu'}}$.

  We can immediately discharge all of the \rulename{E-*-Bad} rules
  (except for \renodebadone) as the calls to \forcesym\ return \stuck.
  An examination of \forcesym shows that if \forcesym\ returns \stuck\
  then \vsu\ and \tsu\ are unchanged.
  \begin{description}
  \item[Case \replusgood:]
    We \forcesym\ $v_1$ and $v_2$ to $\tint$, so we must consider
    the $\force{\vhole{\thole}}{t}{\vsu}{\tsu}$ and
    $\force{n}{\tint}{\vsu}{\tsu}$ cases.
    The $\force{n}{\tint}{\vsu}{\tsu}$ case is trivial as it does
    not change \vsu\ or \tsu.
    In the $\force{\vhole{\thole}}{t}{\vsu}{\tsu}$ we will either
    find that $\ehole \in \vsu$ or we will generate a fresh \tint
    and extend \vsu.
    Note that when we extend \vsu\ we also extend \tsu\ due to the
    call to \unifysym, thus in the $\vhole \in \vsu$ we cannot
    actually refine either $\ehole$ or $\thole$ and thus the
    refinement is preserved.
    When we extend \vsu\ with a binding for \ehole, the call to
    \unifysym ensures that we add a compatible binding for
    \thole if one was not already in \tsu, thus the refinement
    relation must continue to hold.
  \item[Case \rulename{E-If-Good\{1,2\}}:]
    Similar to \replusgood.
  \item[Case \reappgood:]
    Similar to \replusgood.
  \item[Case \releafgood:]
    This step cannot change \vsu\ or \tsu\ thus the refinement
    relation continues to hold trivially.
  \item[Case \renodegood:]
    We \forcesym\ $v_2$ and $v_3$ to $\ttree{t}$, so we must consider
    three cases of \forcesym.
    \begin{description}
    \item[\force{\vhole{\thole}}{t}{\vsu}{\tsu}:]
      Similar to \replusgood.
    \item[\force{\vleaf{t_1}}{\ttree{t_2}}{\vsu}{\tsu}:]
      This case may extend \tsu\ but not \vsu, so the refinement
      continues to hold trivially.
    \item[\force{\vnode{t_1}{v_1}{v_2}{v_3}}{\ttree{t_2}}{\vsu}{\tsu}:]
      Same as \vleaf{t_1}.
    \end{description}
  \item[Case \rulename{E-Case-Good\{1,2\}}:]
    Similar to \replusgood.
  \item[Case \repcasegood:]
    Similar to \replusgood.
  \end{description}
\end{proof}

\begin{proof}[Proof of Lemma~\ref{lem:k-stuck}]
  We can construct $v$ from $\trace$ as follows.
  Let
  $$
  \trace_i = \triple{\eapp{f}{\vhole{\thole}}}{\emptysu}{\emptysu},
             \ldots,
             \triple{e_{i-1}}{\vsu_{i-1}}{\tsu_{i-1}},
             \triple{e_{i}}{\vsu_{i}}{\tsu_{i}}
  $$
  be the shortest prefix of $\trace$ such that
  $\tincompat{\ptype{\trace_i}{f}}{t}$.
  %
  We will show that $\ptype{\trace_{i-1}}{f}$ 
  must contain some other hole $\thole'$ that is
  instantiated at step $i$.
  Furthermore, $\thole'$ is instantiated in such a way that
  $\tincompat{\ptype{\trace_i}{f}}{t}$.
  Finally, we will show that if we had instantiated $\thole'$ such that
  $\tcompat{\ptype{\trace_i}{f}}{t}$,
  the current step would have gotten $\stuck$.

  %
  Since $\tsu_{i-1}$ and $\tsu_{i}$ differ only in $\thole'$ but the resolved
  types differ, we have
  $\thole' \in \ptype{\trace_{i-1}}{f}$
  and
  $\ptype{\trace_i}{f} = \ptype{\trace_{i-1}}{f}\sub{\thole'}{t'}$.
  %
  Let $s$ be a
  concrete type such that $\ptype{\trace_{i-1}}{f}\sub{\thole'}{s} = t$.
  We show by case analysis on the evaluation rules that
  $$\step{e_{i-1}}{\vsu_{i-1}}{\tsu_{i-1}[\thole' \mapsto s]}{\stuck}{\vsu}{\tsu}$$
    \begin{description}
    \item[Case \replusgood:]
      Here we \forcesym $v_1$ and $v_2$ to $\tint$, so the first case of
      \forcesym must apply ($\force{n}{\tint}{\vsu}{\tsu}$ cannot apply
      as it does not change $\tsu$).
      In particular, since we extended $\tsu_{i-1}$ with
      $[\thole' \mapsto t']$ we know that $\thole' = \thole$ and
      $t' = \tint$.
      Let $s$ be any concrete type that is incompatible with $\tint$
      and $\tsu_s = \tsu_{i-1}[\thole \mapsto s]$,
      $\force{\vhole{\thole}}{\tint}{\vsu_{i-1}}{\tsu_s]} = \triple{\stuck}{\vsu_{i-1}}{\tsu_s}$.
    \item[Case \rulename{E-Plus-Bad\{1,2\}}:]
      These cases cannot apply as \forcesym does not update \tsu\ when
      it returns \stuck.
    \item[Case \rulename{E-If-Good\{1,2\}}:]
      Similar to \replusgood.
    \item[Case \reifbad:]
      This case cannot apply as \forcesym does not update \tsu\ when
      it returns \stuck.
    \item[Case \reappgood:]
      Similar to \replusgood.
    \item[Case \reappbad:]
      This case cannot apply as \forcesym does not update \tsu\ when
      it returns \stuck.
    \item[Case \releafgood:]
      This case cannot apply as it does not update \tsu.
    \item[Case \renodegood:]
      Here we \forcesym $v_2$ and $v_3$ to $\ttree{t}$,
      so we must consider three cases of \forcesym.
      \begin{description}
      \item[\force{\vhole{\thole}}{t}{\vsu}{\tsu}:]
        Similar to \replusgood.
      \item[\force{\vleaf{t_1}}{\ttree{t_2}}{\vsu}{\tsu}:]
        For this case to extend \tsu\ with $[\thole' \mapsto t']$,
        either $t_1$ or $t_2$ must contain $\thole'$.
        Let $s$ be any concrete type that is incompatible with $t'$
        and $\tsu_s = \tsu_{i-1}[\thole \mapsto s]$,
        $\force{\vhole{\thole}}{\tint}{\vsu_{i-1}}{\tsu_s]} = \triple{\stuck}{\vsu_{i-1}}{\tsu_s}$.
      \item[\force{\vnode{t_1}{v_1}{v_2}{v_3}}{\ttree{t_2}}{\vsu}{\tsu}:]
        Same as \vleaf{t_1}.
      \end{description}
    \item[Case \renodebadone:]
      This case cannot apply as \forcesym does not update \tsu\ whe
      it returns \stuck.
    \item[Case \renodebadtwo:]
      Similar to \renodegood.
    \item[Case \rulename{E-Case-Good\{1,2\}}:]
      Here we \forcesym $v$ to $\ttree{\thole}$,
      so we must consider three cases of \forcesym.
      \begin{description}
      \item[\force{\vhole{\thole}}{t}{\vsu}{\tsu}:]
        Similar to \replusgood.
      \item[\force{\vleaf{t_1}}{\ttree{t_2}}{\vsu}{\tsu}:]
        This case cannot extend \tsu\ with $[\thole' \mapsto t']$ as we
        use a fresh $\thole$, which cannot be referenced by
        $\ptype{\trace_{i-1}}{f}$, in the call to \forcesym, and thus it
        cannot apply.
      \item[\force{\vnode{t_1}{v_1}{v_2}{v_3}}{\ttree{t_2}}{\vsu}{\tsu}:]
        Same as \vleaf{t_1}.
      \end{description}
    \item[Case \recasebad:]
      This case cannot apply as \forcesym does not update \tsu\ whe
      it returns \stuck.
    \item[Case \repcasegood]
      Here we \forcesym $v$ to $\tprod{\thole_1}{\thole_2}$,
      so we must consider two cases of \forcesym.
      \begin{description}
      \item[\force{\vhole{\thole}}{t}{\vsu}{\tsu}:]
        Similar to \replusgood.
      \item[\force{\epair{v_1}{v_2}}{\tprod{t_1}{t_2}}{\vsu}{\tsu}:]
        This case cannot extend \tsu\ with $[\thole' \mapsto t']$ as we
        use a fresh $\thole_1$ and $\thole_2$, which cannot be
        referenced by $\ptype{\trace_{i-1}}{f}$, in the call to
        \forcesym, and thus it cannot apply.
      \end{description}
    \item[Case \repcasebad:]
      This case cannot apply as \forcesym does not update \tsu\ whe
      it returns \stuck.
    \end{description}
  Finally, by Lemma~\ref{lem:resolve-compat} we know that
  $\vsub{\ptype{\trace_{i-1}}{f}}{\resolve{\ehole}{\vsu_{i-1}}}$
  and thus $\thole' \in \resolve{\vhole{\thole}}{\vsu_{i-1}}$.
  %
  Let $u = \gen{s}{\tsu}$
  and $v = \resolve{\ehole}{\vsu_{i-1}}\sub{\ehole'[\thole']}{u}\sub{\thole'}{s}$,
  $\steps{\eapp{f}{v}}{\emptysu}{\emptysu}{\stuck}{\vsu}{\tsu}$ in $i$ steps.
\end{proof}
\clearpage


\section{User Study}
\label{sec:user-study-exams}
\subsection{Version A}
\fbox{\includegraphics[width=\textwidth,page=4]{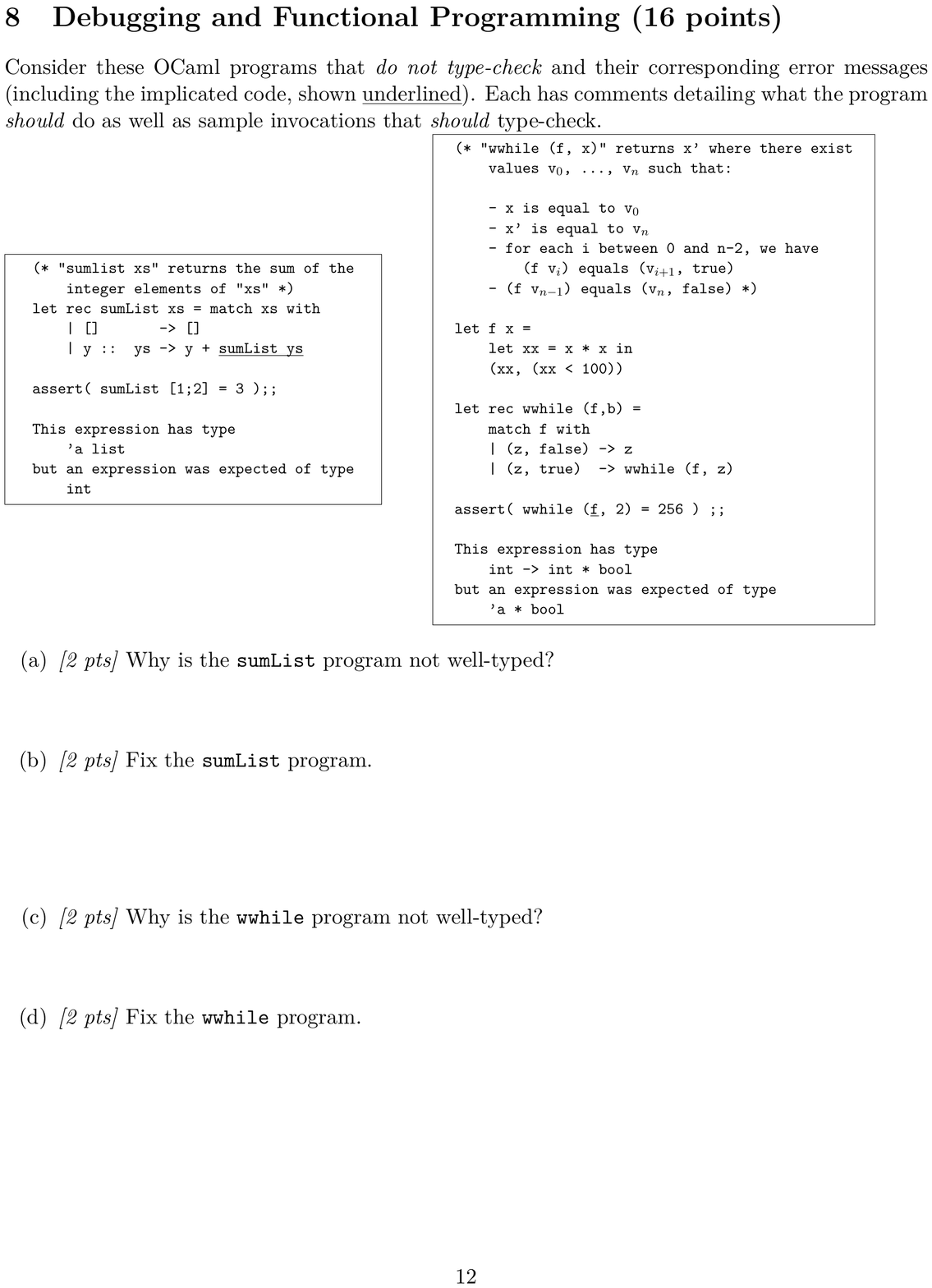}}
\newpage
\fbox{\includegraphics[width=\linewidth,page=5]{user-study.pdf}}
\newpage
\fbox{\includegraphics[width=\linewidth,page=6]{user-study.pdf}}
\newpage
\subsection{Version B}
\fbox{\includegraphics[width=\linewidth,page=1]{user-study.pdf}}
\newpage
\fbox{\includegraphics[width=\linewidth,page=2]{user-study.pdf}}
\newpage
\fbox{\includegraphics[width=\linewidth,page=3]{user-study.pdf}}
}

\end{document}